\documentclass[sigconf, nonacm]{acmart}

\AtBeginDocument{%
  }

\setcopyright{none}
\copyrightyear{}
\acmYear{}
\acmDOI{}


\acmPrice{}
\acmISBN{}




\usepackage{amsfonts}
\usepackage{amsthm}
\usepackage{tabularx}
\usepackage{hyperref}
\usepackage{graphicx, subcaption}
\usepackage{bbold}
\usepackage{algorithm}
\usepackage[noend]{algorithmic}

\usepackage{balance}	

\usepackage{tikz}
\usetikzlibrary{calc}
\usepackage{enumitem}
\usepackage{notation}

\begin{document}

\title{Overlapping and Robust Edge-Colored Clustering in Hypergraphs}



\author{Alex Crane}
\email{alex.crane@utah.edu}
\affiliation{%
	\institution{University of Utah}
	\city{Salt Lake City}
	\state{Utah}
	\country{USA}
}

\author{Brian Lavallee}
\email{brian.lavallee@utah.edu}
\affiliation{%
	\institution{University of Utah}
	\city{Salt Lake City}
	\state{Utah}
	\country{USA}
}

\author{Blair D. Sullivan}
\email{sullivan@cs.utah.edu}
\affiliation{%
	\institution{University of Utah}
	\city{Salt Lake City}
	\state{Utah}
	\country{USA}
}

\author{Nate Veldt}
\email{nveldt@tamu.edu}
\affiliation{%
	\institution{Texas A\&M University}
	\city{College Station}
	\state{Texas}
	\country{USA}
}

\renewcommand{\shortauthors}{Alex Crane, Brian Lavallee, Blair D. Sullivan, and Nate Veldt}

\begin{abstract}
    A recent trend in data mining has explored (hyper)graph clustering algorithms for
    data with categorical relationship types. Such algorithms have 
    applications in the analysis of social, co-authorship, and protein
    interaction networks, to name a few. Many such applications
    naturally have some overlap between clusters, a nuance which is missing
    from current combinatorial models. Additionally, existing models lack a
    mechanism for handling noise in datasets. We address both of these
    concerns by generalizing \ECCfull{}, a recent framework for 
    categorical clustering of hypergraphs. Our generalizations allow for a budgeted 
    number of either (a) overlapping cluster assignments or (b) node deletions. 
    For each new model we present a greedy algorithm which approximately minimizes
    an edge mistake objective, as well as 
    bicriteria approximations where the second approximation factor 
    is on the budget. Additionally, we address the parameterized complexity 
    of each problem, providing FPT algorithms and hardness results.
\end{abstract}

\maketitle

\section{Introduction}
Graph clustering is a fundamental problem in data mining, relevant whenever the dataset captures relationships between entities. A recent trend has focused on clustering hypergraphs~\cite{papa2007hypergraph,li2017motif,li2017inhomogeneous,gleich2018correlation,li2018submodular,fukunaga2019lp,fountoulakis2021local,hein2013total}, which better capture \emph{multiway} relationships such as group social interactions, group email correspondence, and academic co-authorship.
A second trend has emphasized clustering on edge-colored (hyper)graphs, where the
color of a (hyper)edge indicates an interaction of a certain
\emph{type} or \emph{category}
\cite{amburg2020clustering,bonchi2015chromatic,anava2015improved,klodt2021color,xiu2022chromatic,amburg2022diverse, veldt2022optimal}.

Given a (hyper)graph with edge categories (modeled by colors), \emph{categorical clustering} seeks to cluster nodes in such a way that node clusters tend to match with edge categories. This type of clustering is natural in many different settings. For example, a hyperedge can be used to represent a group of researchers who co-author a paper together, and edge colors can denote paper discipline or venue (e.g., WSDM or SIGCOMM)~\cite{amburg2020clustering,bonchi2012chromatic}. In this context, categorical clustering provides a way to infer authors' research areas based on publication history. Edge-colored hypergraphs can also be used to represent sets of ingredients (nodes) in the same food recipe (hyperedge), with edge color indicating cuisine type (e.g., Korean food or French cuisine)~\cite{amburg2020clustering,klodt2021color,xiu2022chromatic}. Categorical clustering then provides a way to identify groups of ingredients that are frequently used together in the same cuisine type. Edge colors can also represent discrete time windows in temporal networks~\cite{amburg2020clustering}, in which case categorical clustering can be used for temporal clustering, e.g., to identify groups of users in an online social network that are active in the same time period~\cite{amburg2020clustering}. Existing tools for categorical clustering have also been used to cluster biological networks (where colors indicate gene interaction types)~\cite{bonchi2012chromatic}, social networks such as Facebook and Twitter (where colors indicate relationship types)~ \cite{klodt2021color,xiu2022chromatic}, and online vacation rentals based on user browsing history (where hyperedges are groups of rentals browsed during the same user session and colors indicate the location of the user)~\cite{veldt2022optimal}.

This paper addresses two limitations of previous research. First, existing
categorical clustering models do not allow for overlap between clusters. However, cluster overlap is usually necessary to accurately model the application in question. 
Researchers publish in multiple fields and often are identified with multiple areas of expertise. Many food ingredients are common across a wide range of different cuisines. Users on social media can be very active in more than one discrete time window.
The second limitation is that existing techniques do not allow for any notion of robustness to noise among the nodes.
This may be necessary to achieve accurate clusterings either
in applications where the data itself is noisy, or in applications
where the nature of the network
under consideration makes it impossible to accurately restrict the cluster memberships of
certain nodes. For example, a reasonable clustering of ingredients by cuisine should not
restrict nearly universal ingredients such as salt and flour to one or even a few clusters.
Our goal is to provide the
first clustering algorithms to simultaneously (i) respect high-order relationships (hyperedges),
(ii) provide a categorical clustering according to discrete edge labels, and (iii) contain a mechanism for allowing
overlapping clusters or robustness to noisy data points.

\textbf{Modeling overlap and robustness.} We accomplish this goal through the lens of \ECCfull{} (\ECC{}), a recent
categorical clustering framework \cite{angel2016clustering}.
Given an edge-colored hypergraph, \ECC{} asks for
assignments of colors to nodes such that, as much as possible, edges
contain nodes which match their color. Formally, a solution
makes a \emph{mistake} at edge $e$ of color $c$ if any node in $e$ is not assigned
color $c$. The goal is to produce an assignment of colors to nodes which
minimizes mistakes.
We define three new generalizations of this framework. 
These all have the same objective function as \ECC{} (minimize edge mistakes), but each comes with a parameter $b$ that defines a \emph{budget} for different ways of assigning additional colors to nodes.
\begin{itemize}[itemsep=2pt,leftmargin=*]
	\item \LOECCfull{} (\LOECC{}) seeks to minimize the number of edge mistakes, while assigning up to $b \geq 1$ colors to each node.
	\item \GOECCfull{} (\GOECC{}) allows one ``free'' color assignment per node, and additionally has $b \geq 0$ extra node-color assignments that are shared among all nodes. 
	\item \RECCfull{} also allows one ``free'' color assignment per node, and then up to $b \geq 0$ nodes are given \emph{every} color. Equivalently, up to $b$ nodes are deleted and hyperedges are contracted to contain only non-deleted nodes.
\end{itemize}
These models capture many different settings. \LOECC{} is arguably the most natural for academic co-authorship datasets, where each researcher can be associated with a small number of areas of expertise. \GOECC{} and \RECCfull{} are natural when many nodes can be associated with one category, but a few nodes should be associated with many categories (or transcend categories in the case of \RECCfull{}). For example, in clustering food ingredients, some ingredients are common in many cuisines (e.g., soy sauce or saffron) and some ingredients are so common they are in every category (e.g., salt). It is also worth noting that \GOECC{} 
is in some ways more flexible than \LOECC{}, since for a large enough budget $b$ one can always choose to give every node a small number of colors if that is the best fit for the dataset. As we shall see, this flexibility can sometimes lead to differences in theoretical results.

Each of these problems generalizes \ECC{}, with equivalence for the first
coming at $b = 1$ and for the latter two at $b = 0$.
Consequently, they are each NP-hard \cite{angel2016clustering}, motivating the study
of approximations and parameterized algorithms.

\textbf{Approximations.}
In Section \ref{sec:approximations}
we present an $r$-approximation on the number of edge mistakes for each objective, where $r$ is the
maximum hyperedge size, using a greedy approach for assigning colors. Moreover, for \LOECC{} we present a simple LP-rounding scheme
which results in a $(b+1)$-approximate solution.
In practice one might expect that flexibility should be allowed not just on the
solution size, but also on the budget used to achieve the solution. We therefore also study
bicriteria $(\alpha, \beta)$-approximations, where $\alpha$ is the approximation factor
on edge mistakes and $\beta$ is the approximation factor on the budget $b$.
For \LOECC{} and \RECC{} we present algorithms with constant $\alpha$ and $\beta$, while for
\GOECC{} we give an algorithm with constant $\beta$ and $\alpha$ linear in $b$.

\textbf{Parameterized algorithms.} In Section \ref{sec:hardness} we study the decision variants, each of which
asks whether it is possible to find a clustering which makes no more than $t$ mistakes.
We give a complete characterization of the parameterized complexity of our
problems in terms of solution size by showing that each is 
fixed parameter tractable (FPT)\footnote{A problem is FPT with respect to parameter $k$ if
	it can be solved in $f(k) \cdot n^{O(1)}$ time, where $n$ is the instance size and $f$ is any computable function.
	We refer to~\cite{flum2006parameterized} for a comprehensive study of parameterized complexity.} 
in $t+b$, but W[1]-hard in $t$ and para-NP-hard in $b$. This implies
(under standard assumptions) that they are not FPT with respect to $t$ or $b$ individually.

\textbf{Empirical analysis.} In Section \ref{sec:experiments} we present experimental results on datasets from the application domains motivated above, as well as several others. In practice, our approximation algorithms tend to significantly outperform their theoretical guarantees.
%

\section{Preliminaries}
Let $H = (V,E, \ell)$ denote an edge-colored hypergraph with node set $V$ and colored (hyper)edge set $E$. Throughout the text, $k$ denotes the number of colors, and $r$ is the \emph{rank} of the hypergraph, i.e., the maximum hyperedge size. The function $\ell \colon E \rightarrow [k] = \{1,2, \hdots, k\}$ labels each edge with a color, and $E_i \subseteq E$ denotes the set of edges of color $i \in [k]$. For $v \in V$, we let $E(v) \subseteq E$ denote the set of edges incident to $v$ and $d_v = |E(v)|$ be its degree. We say that color $i$ is incident to node $v$ if $v$ is contained in at least one edge in $E_i$. The {chromatic degree} $d_v^\chi$ of $v$ is the number of colors incident to $v$.

The goal of all \ECC{} problems we consider is to color nodes in a way that correlates as much as possible with edge colors, subject to different node-color constraints. To accommodate multiple color assignments, we consider maps $\lambda \colon V \rightarrow 2^{[k]}$ from nodes to the power set $2^{[k]}$ of colors, so $\lambda(v) \subseteq [k]$ represents the set of colors assigned to node $v$. Map $\lambda$ makes a \emph{mistake} at $e \in E$ if there exists a node $v \in e$ such that $\ell(e) \notin \lambda(v)$, and we say the edge is \emph{unsatisfied}. Otherwise the edge is \emph{satisfied} by $\lambda$. The set of edges where $\lambda$ makes a mistake is denoted $\mathcal{M}_\lambda \subseteq E$. Let $\mathbb{1}$ denote a binary indicator function so that $\mathbb{1}(e \in \mathcal{M}_\lambda)$ is $1$ if $\lambda$ makes a mistake at $e$ and is 0 otherwise. The optimization versions of \LOECC{}, \GOECC{}, and \RECC{} seek to minimize the number of mistakes,
\begin{equation}
	\label{eq:objective}
	{\textstyle \min_{\lambda} \sum_{e \in E} \mathbb{1}(e \in \mathcal{M}_\lambda)}
\end{equation}
subject to different constraints on the color map $\lambda$. \LOECC{} is restricted to maps satisfying $|\lambda(v)| \leq b$, \GOECC{} is restricted to maps satisfying $\sum_{v \in V} (|\lambda(v)| - 1) \leq b$ and $|\lambda(v)| \geq 1$ for every $v \in V$, and \RECC{} has the restriction that the number of nodes $v$ satisfying $\lambda(v) = [k]$ is at most $b$, while all other nodes $u \in V$ have one color assignment: $|\lambda(u)| = 1$. One can equivalently define all of these objectives as edge deletion problems, where the goal is to minimize the number of edges to delete in order to satisfy all remaining edges (subject to any constraints on $\lambda$).
One can define a maximization variant for all of these \ECC{} objectives, where the goal is to maximize the number of satisfied edges rather than minimize the number of mistakes. This is equivalent at optimality but different in terms of approximations and parameterized complexity. In this paper we focus on the mistake minimization objective.

\section{Related work}
Angel et al. \cite{angel2016clustering}
introduced the maximization variant of \ECC{} in graphs and showed that
it is NP-hard in general but polynomial-time solvable when $k = 2$. The
approximability of the maximization
variant on graphs has attracted considerable interest
\cite{ageev2015improved,ageev20200,alhamdan2019approximability}.
Amburg et al. \cite{amburg2020clustering} generalized
\ECC{} to hypergraphs, and studied the approximability of the minimization variant.
Veldt \cite{veldt2022optimal} improved the best known approximation factor for this variant to
$\min\{2(1 - \frac{1}{k}), 2(1 - \frac{1}{r+1})\}$.
Cai and Leung \cite{cai2018alternating}
showed that \ECC{} in graphs is FPT with respect
to both satisfied and unsatisfied edges. Kellerhals et al. \cite{kellerhals2023parameterized}
gave improved
parameterized algorithms for both graphs and hypergraphs, and showed
fixed-parameter (in)tractability for a variety of structural parameters.

\ECC{} is related to \textsc{Chromatic Correlation Clustering}
\cite{bonchi2015chromatic,anava2015improved,klodt2021color,xiu2022chromatic},
which generalizes \textsc{Correlation Clustering} \cite{bansal2004correlation}
to edge-colored graphs. 
Other variants of \textsc{Correlation Clustering} allow overlapping clusters~\cite{bonchi2013overlapping,andrade2014evolutionary,li2017motif}. However, none of these apply simultaneously to (i) hypergraphs, (ii) categorical interactions, and (iii) overlapping clusters.
Aboud~\cite{aboud2008correlation}
defined a \textsc{Correlation Clustering} variant in which any
node can be discarded for a price given by a penalty function
$p: V \rightarrow \mathbb{R}$. Devvrit et al. \cite{devvrit2019robust}
studied the case of a constant penalty function and added a budget for the total number of
discarded nodes, providing a bicriteria approximation.
We adopt their notion of robustness, but emphasize that
the resulting algorithms and complexity analysis are distinct, because \textsc{Correlation Clustering}
and \ECC{} have fundamentally different clustering objectives. 
More generally, notions of robustness
have been introduced in a variety of clustering frameworks
\cite{charikar2001algorithms,chen2008constant,krishnaswamy2018constant}, though to our knowledge never in combination with edge-colors.
\section{Approximation Algorithms}
\label{sec:approximations}
We present approximation algorithms for overlapping and robust ECC, including greedy $r$-approximations for each objective and bicriteria approximations 
based on linear programming (LP).

\subsection{Greedy algorithms}
\label{sec:greedy-approximations}
Our greedy $r$-approximation algorithms generalize the greedy algorithm for standard ECC developed by Amburg et al.~\cite{amburg2020clustering}. We first consider an alternate {linear} penalty for hyperedge mistakes, where the penalty at an edge $e \in E$ equals the number of nodes in $e$ that are not assigned the color of $e$. Formally, if $\lambda(v) \subseteq [k]$ is the set of colors assigned to $v$, this penalty is given by
\begin{equation}
	\label{eq:linpenalty}
	{\textstyle p(e,\lambda) = \sum_{v \in e} \mathbb{1}(\ell(e) \notin \lambda(v)).}
\end{equation}
For every $e \in E$ and node coloring $\lambda$, this linear penalty satisfies
\begin{equation}
	\label{eq:penaltyinequalities}
\mathbb{1}(e \in \mathcal{M}_\lambda) \leq p(e,\lambda)  \leq |e| \cdot \mathbb{1}(e \in \mathcal{M}_\lambda) \leq r \cdot \mathbb{1}(e \in \mathcal{M}_\lambda).
\end{equation}
Thus, if we can compute $\hat{\lambda} = \argmin_{\lambda} \sum_{e \in E} p(e,\lambda)$, where we minimize over a desired class of coloring functions $\lambda$ (e.g., overlapping or robust), then $\hat{\lambda}$ will be within a factor $r$ of the optimal clustering for the standard edge penalty. This applies in the same way for all overlapping and robust variants.
It remains to show that we can find this optimal $\hat{\lambda}$ in polynomial time using a greedy approach.

For node $v$ and edge $e \ni v$, if $\ell(e) \notin \lambda(v)$ we say there is a \emph{node-edge error} at $(v,e)$. Each node in an edge contributes independently to the linear edge penalty, so the linear ECC objective function simply amounts to minimizing the number of node-edge errors:
\begin{equation}
	\label{eq:linrearrange}
	 \min_\lambda \sum_{e \in E} p(e,\lambda) = \min_\lambda \sum_{v \in V} \sum_{e \ni v} \mathbb{1}(\ell(e) \notin \lambda(v)).
\end{equation}
The greedy algorithm for minimizing this objective starts with an empty labeling $\lambda$ that assigns no color to every node, and then iteratively adds node-color assignments to greedily maximize the number of node-edge errors that are fixed at each step. To formalize this, for node $v \in V$ and color $c$, let $\numvc{v}{c} = |E(v) \cap E_c|$ be the number of edges of color $c$ that $v$ belongs to. Let $\permv{v}$ be a permutation vector that arranges colors based on how often $v$ participates in them:
\begin{equation}
	\label{eq:vpreference}
	\numvc{v}{\permvi{v}{1}} \geq \numvc{v}{\permvi{v}{2}} \geq \cdots \geq \numvc{v}{\permvi{v}{k}}.
\end{equation}
We think of $\permvi{v}{i}$ as node $v$'s ``\textit{$i$-th favorite color}''. Figure~\ref{alg:greedy} gives pseudocode for the greedy procedure for \LOECC, \GOECC, and \RECC. The fact that each method optimizes the linear objective function given in~\eqref{eq:linrearrange} for each variant of overlapping or robust ECC follows from the fact that node-edge errors are independent across different nodes. If node $v$ is assigned $T$ colors, then these must be $v$'s top $T$ favorite colors, otherwise we could improve the objective by switching color assignments at that node. Furthermore, if we have a global budget of additional colors to assign (or a budget on the number of nodes that can be assigned all colors at once), it is optimal to choose nodes at each step for which additional color assignments lead to the greatest immediate improvement in the linear objective. We conclude with the following theorem.
\begin{figure}[t]

		\begin{minipage}{\linewidth}
\hrule\

			\textbf{Greedy-\LOECC}
			\begin{algorithmic}[1]
			\STATE $\lambda(v) = \{\permvi{v}{i} \colon i = 1, 2, \hdots b\}$ for each $v \in V$ 
			\end{algorithmic}
	
\vspace{2pt}
\textbf{Greedy-\GOECC}
		\begin{algorithmic}[1]
			\STATE $\lambda(v) = \{\permvi{v}{1}\}$  for each $v \in V$ 
			\FOR{$j = 1$ to $b$}
			\STATE // find $u$ whose next color choice fixes the most errors
			\STATE	$u = \argmax_v \; \numvc{v}{\permvi{v}{|\lambda(v)| + 1}}$
			\STATE // add $u$'s next favorite color to its color list
			\STATE $\lambda(u) = \lambda(u) \cup \{\permvi{v}{|\lambda(v)|+1}\}$
			\ENDFOR 
		\end{algorithmic}
\vspace{2pt}
\textbf{Greedy-\RECC}
\begin{algorithmic}[1]
	\STATE $\lambda(v) = \{\permvi{v}{1}\}$  for each $v \in V$
	\FOR{$j = 1$ to $b$}
	\STATE// find $u$ whose deletion fixes the most errors
	\STATE $u = \argmax_{v \in R} d_v - \numvc{v}{\permvi{v}{1}}$
	\STATE	$\lambda(u) = [k]$ // delete $u$ (give it all colors)
	\ENDFOR 
\end{algorithmic}

\hrule
\end{minipage}
\caption{Greedy \LOECC{}, \GOECC{}, and \RECC{}. \ The respective execution times are $O(r|E| + k|V|\log (k))$, $O(r|E| + k|V|\log{(k|V|)})$, and $O(r|E| + |V|(k + \log{(|V|)}))$.}
\label{alg:greedy}	
\end{figure}

\begin{theorem}
The greedy approach provides an $r$-approximation for the standard \LOECC{}, \GOECC{}, and \RECC{} objectives.
\end{theorem}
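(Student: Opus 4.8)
The plan is to reduce the entire statement to two facts: first, that the linear penalty $p(e,\lambda)$ sandwiches the true mistake count as in \eqref{eq:penaltyinequalities}; and second, that for each variant the greedy procedure computes a coloring $\hat\lambda$ that \emph{exactly} minimizes the linear objective $\sum_{e} p(e,\lambda)$ over the relevant constrained class of maps. Granting these, the approximation is a short chain: if $\lambda^*$ denotes an optimal solution for the standard (mistake-count) objective within the same class, then
\begin{equation*}
\sum_{e} \mathbb{1}(e \in \mathcal{M}_{\hat\lambda}) \le \sum_{e} p(e,\hat\lambda) \le \sum_{e} p(e,\lambda^*) \le r \sum_{e} \mathbb{1}(e \in \mathcal{M}_{\lambda^*}),
\end{equation*}
where the outer inequalities are the left and right halves of \eqref{eq:penaltyinequalities} and the middle one is optimality of $\hat\lambda$ for the linear objective (using that $\lambda^*$ lies in the feasible class). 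The first and last expressions are the greedy cost and $r\cdot\mathrm{OPT}$, so this is exactly the claimed $r$-approximation, and the same chain applies verbatim to all three variants. Note this chain only needs greedy's optimality for the \emph{linear} objective; no separate argument about the combinatorial mistake objective (or about contraction in \RECC{}) is required.

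So the real content is the second fact, which I would establish using the decomposition \eqref{eq:linrearrange}: the linear objective splits into independent per-node contributions $\sum_{e \ni v} \mathbb{1}(\ell(e) \notin \lambda(v)) = d_v - \sum_{c \in \lambda(v)} \numvc{v}{c}$, since each edge carries a single color. Minimizing the total is therefore equivalent to \emph{maximizing} $\sum_v \sum_{c \in \lambda(v)} \numvc{v}{c}$ subject to the variant's constraints. For \LOECC{} the per-node budget $|\lambda(v)| \le b$ decouples completely across nodes, so the optimum assigns each $v$ the $b$ colors of largest $\numvc{v}{\cdot}$, which are precisely $v$'s top favorites $\permvi{v}{1},\dots,\permvi{v}{b}$; this is what Greedy-\LOECC{} outputs. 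For \RECC{}, after seeding each node with its favorite color, giving a node all colors zeroes out its residual contribution, saving exactly $d_v - \numvc{v}{\permvi{v}{1}}$; since these savings remain independent across nodes, selecting the $b$ nodes with the largest savings is optimal, matching the argmax in Greedy-\RECC{}.

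The main obstacle is \GOECC{}, where the single global budget $\sum_v(|\lambda(v)|-1) \le b$ couples the nodes. Here I would observe that the marginal gain from giving node $v$ its $i$-th color, namely $\numvc{v}{\permvi{v}{i}}$, is nonincreasing in $i$ by the ordering \eqref{eq:vpreference}. This recasts the problem as selecting $b$ items from a family of per-node chains of nonincreasing weights to maximize total weight, subject to the prefix constraint that a color may be added to $v$ only after all of $v$'s earlier favorites. Because the weights along each chain are nonincreasing, the $b$ globally largest marginals automatically form valid prefixes, so any feasible selection has weight at most the sum of the top $b$ marginals; the greedy rule of repeatedly adding the available color-assignment of largest marginal gain (the argmax in Greedy-\GOECC{}) realizes exactly that sum and is hence optimal. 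I would make this rigorous with a standard exchange argument: at the first step where some optimal solution diverges from greedy, the nonincreasing-marginals property lets us substitute greedy's choice without decreasing the objective or violating feasibility, and iterating transforms the optimum into the greedy output.
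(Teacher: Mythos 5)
Your proposal is correct and follows essentially the same route as the paper: the sandwich inequality \eqref{eq:penaltyinequalities} reduces the theorem to showing that each greedy procedure exactly minimizes the linear objective \eqref{eq:linrearrange}, which decouples across nodes so that top-favorite prefixes are optimal per node and budget units go to the largest marginal savings. Your treatment of \GOECC{} (nonincreasing marginals along each node's chain, plus an exchange argument) is a more rigorous spelling-out of the step the paper simply asserts, namely that greedily taking the greatest immediate improvement is optimal under a shared global budget.
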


\subsection{Linear Programming Algorithms}
A bicriteria $(\alpha,\beta)$-approximation algorithm is one that comes within a factor $\alpha$ of the optimal number of edge mistakes while violating budget constraints by a factor at most $\beta$. Our greedy algorithms satisfy this guarantee with $\alpha = r$ and $\beta = 1$ (i.e., no budget violation), but the dependence on $r$ in the approximation factor is not ideal. We now turn to algorithms based on linear programming relaxations, which have much better approximation factors, at the expense of going over budget constraints by a small amount.

\paragraph{\LOECC{} LP Algorithms} We present the following LP relaxation for \LOECC{}:
\begin{align}
	\label{eq:loecc}
	\begin{aligned}
		\text{min} \quad& \textstyle \sum_{e \in E} x_e \\
		\text{s.t.} \quad& \forall v \in V:\\
		&\forall c \in [k], e \in E_c:\\
		&x_v^c, x_e \in [0, 1]
	\end{aligned}
	\begin{aligned}
		&\\
		&\textstyle \sum_{c=1}^k x_v^c \geq k - b \\
		\quad &x_v^c \leq x_e \quad \forall v \in e \\
		&\forall c \in [k], v \in V, e \in E.
	\end{aligned}
\end{align}
If we replace constraints $x_v^c, x_e \in [0, 1]$ with the binary constraints $x_v^c, x_e \in \{0, 1\}$, then this becomes an integer linear program (ILP) that exactly encodes \LOECC{}. The variable $x_v^c$ can be thought of as the distance from node $v$ to color $c$, and the constraint $\sum_{c} x_v^c \geq  k-b$ captures the fact that each node can be assigned at most $b$ colors. The LP relaxation can be solved in polynomial time, and its optimal solution provides a lower bound on the optimal \LOECC{} solution. We can round the fractional LP into a clustering to provide a range of bicriteria approximations that trade off in terms of budget violation and objective function approximation.
\begin{theorem}
	\label{thm:LLP}
	Let $\{x_v^c, x_e \colon e \in E, v \in V\}$ denote optimal LP variables for the LP given in~\eqref{eq:loecc}. Let $\rho \in (0,1)$ be any threshold such that $b/\rho$ is an integer, and let $\lambda$ encode a coloring where node $v \in V$ is assigned color $c \in [k]$ if $x_v^c < 1 - \rho$. This coloring is a bicriteria $\left(\frac{1}{1-\rho}, \frac{1}{\rho} - \frac{1}{b} \right)$-approximation for \LOECC{}.
\end{theorem}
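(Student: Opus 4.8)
The plan is to verify the two bicriteria guarantees separately, each following directly from one of the two structural LP constraints. Throughout I would read $x_v^c$ as a fractional ``distance'' from node $v$ to color $c$, so that under the rounding rule $v$ receives color $c$ precisely when $x_v^c < 1 - \rho$, and I would treat $\sum_e x_e$ as a lower bound on $\mathrm{OPT}$ since the binary version of~\eqref{eq:loecc} exactly encodes \LOECC{}.

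For the objective factor $\alpha = \tfrac{1}{1-\rho}$, I would first characterize when the rounded coloring $\lambda$ makes a mistake. An edge $e$ of color $c = \ell(e)$ is unsatisfied exactly when some $v \in e$ fails to receive color $c$, i.e.\ $x_v^c \geq 1 - \rho$. The coupling constraint $x_v^c \leq x_e$ then forces $x_e \geq 1 - \rho$ for every edge in $\mathcal{M}_\lambda$. Summing the mistake indicators and comparing against the LP objective gives
\begin{equation*}
|\mathcal{M}_\lambda| = \sum_{e \in \mathcal{M}_\lambda} 1 \leq \sum_{e \in \mathcal{M}_\lambda} \frac{x_e}{1 - \rho} \leq \frac{1}{1 - \rho} \sum_{e \in E} x_e \leq \frac{1}{1-\rho}\,\mathrm{OPT}.
\end{equation*}

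For the budget factor $\beta = \tfrac{1}{\rho} - \tfrac{1}{b}$, I would fix a node $v$ and bound the size of its assigned color set $S_v = \{c : x_v^c < 1 - \rho\}$. Rewriting the first LP constraint as $\sum_{c} (1 - x_v^c) \leq b$ and discarding all terms with $c \notin S_v$, each surviving term satisfies $1 - x_v^c > \rho$, so $|S_v|\,\rho < b$, i.e.\ $|S_v| < b/\rho$. Here the hypothesis that $b/\rho$ is an integer is exactly what sharpens this strict inequality between integers into $|S_v| \leq b/\rho - 1$, giving a per-node budget of at most $(\tfrac{1}{\rho} - \tfrac{1}{b})\,b$ colors, i.e.\ violation factor $\beta$.

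Neither step involves a genuine technical obstacle; the main thing to get right is the bookkeeping around strict versus non-strict inequalities — specifically, pairing the strict threshold $x_v^c < 1 - \rho$ with its complement $x_v^c \geq 1 - \rho$ on the mistake side, and invoking the integrality of $b/\rho$ at precisely the moment that converts $|S_v| < b/\rho$ into the clean bound $b/\rho - 1$. I would also remark that nothing forces each node to receive a color, but this never hurts the guarantee: any uncolored node only creates mistakes at edges whose $x_e \geq 1 - \rho$, which are already absorbed into the objective bound above.
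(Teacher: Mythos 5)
Your proof is correct and follows essentially the same approach as the paper: the mistake bound via $x_e \geq x_v^{\ell(e)} \geq 1-\rho$ is identical, and your budget argument (rewriting the constraint as $\sum_c (1 - x_v^c) \leq b$ and restricting to assigned colors) is just the contrapositive form of the paper's proof by contradiction, using the same ingredients ($x_v^c \leq 1$ and the integrality of $b/\rho$). Your closing remark about uncolored nodes is a nice bit of extra care, but no changes are needed.
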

\begin{proof}
	For each node $v \in V$, this rounding scheme will assign at most $b/\rho -1 = b(1/\rho - 1/b)$ colors to each node, which implies the overlap constraint is violated by at most a factor $(1/\rho - 1/b)$. Assuming this is not true implies there is some node $v \in V$ such that $x_v^c < 1-\rho$ for at least $b/\rho$ choices of the color $c$. Since $x_v^c \leq 1$ for every $c \in [k]$, we reach the following contradiction to LP feasibility:
	\begin{align*}
	{\textstyle	\sum_{c = 1}^k x_v^c < (1-\rho) \frac{b}{\rho} + k - \frac{b}{\rho} = k - b.}
	\end{align*}
	Making a mistake at $e \in E$ means some $v \in e$ was not assigned color $j = \ell(e)$, so $x_e \geq x_v^j \geq 1-\rho$. Since the cost for each edge is within a factor $1/(1-\rho)$ of the LP variable for this edge, the total cost of $\lambda$ is within a factor $1/(1-\rho)$ of the LP lower bound.
\end{proof}
It is worth noting that if we set $\rho = b/(b+1)$, this leads to a single-criteria $(b+1)$-approximation algorithm for \LOECC{} that directly generalizes the $2$-approximation for standard ECC (\LOECC{} with $b = 1$). Another significant choice of parameter is when $\rho$ is chosen to be a constant, in which case we obtain bicriteria $(\alpha,\beta)$ -approximations where both $\alpha$ and $\beta$ are constants. For example, when $\rho = 1/2$, we have a $(2, 2-1/b)$-approximation.

\paragraph{\GOECC{} LP Algorithms} For \GOECC{} we give the following LP relaxation:
\begin{align}
	\label{eq:goecc}
	\begin{aligned}
		\text{min} \quad& \textstyle \sum_{e \in E} x_e \\
		\text{s.t.} \quad& \forall v \in V:\\
		& \\
		&\forall c \in [k], e \in E_c:\\
		&x_v^c, x_e \in [0, 1]\\
		& y_v \geq 0 
	\end{aligned}
	\begin{aligned}
		&\\
		&\textstyle \sum_{c=1}^k x_v^c \geq k - y_v - 1 \\
		&\textstyle \sum_{v \in V} y_v \leq b\\
		\quad &x_v^c \leq x_e \quad \forall v \in e \\
		&\forall c \in [k], v \in V, e \in E\\
		&\forall v \in V.
	\end{aligned}
\end{align}
To see why this is a relaxation, note that if $\{x_v^c, x_e\}$ variables were constrained to be binary, we would recover the ILP for \GOECC{}, where the variable $y_v$ is an integer encoding the number of \emph{additional} colors assigned to node $v$ beyond its first color assignment. Solving the relaxation produces nonnegative $y_v$ values that are generally not integers. For our LP rounding procedure, we define the following function for $x \geq 0$ and threshold $\delta \in (0,1)$:
\begin{equation}
	\label{eq:roundx}
	\lfloor x \rceil_\delta  = \begin{cases}
		\lfloor x \rfloor & \text{if $x - \lfloor x \rfloor < \delta$} \\
		\lceil x \rceil & \text{ otherwise}.
	\end{cases}
\end{equation} 
The following useful properties hold for every $x \geq 0$:
\begin{align}
	\label{eq:prop1}
	\lfloor x \rceil_\delta + \delta \geq x, \\
	\label{eq:prop2}
	\lfloor x \rceil_\delta \leq x + (1-\delta), \\
	\label{eq:prop3}
	\lfloor x \rceil_\delta \leq \frac{1}{\delta} x.
\end{align}

\begin{theorem}
	\label{thm:GLP}
	Let $\{x_v^c, x_e, y_v \colon e \in E, v \in V\}$ denote optimal LP variables for the LP given in~\eqref{eq:goecc}. For a threshold $\delta \in (0,1)$, let $\rho_v = \frac{1-\delta }{\lfloor y_v \rceil_\delta +2}$ for each $v \in V$, and let $\lambda$ encode a coloring where node $v \in V$ is assigned color $c \in [k]$ if $x_v^c < \rho_v$. This coloring is a bicriteria $\left(\frac{(b+2)}{1-\delta} + 1, \frac{1}{\delta} \right)$-approximation for \GOECC{}.
\end{theorem}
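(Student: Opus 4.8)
The plan is to verify the two bicriteria guarantees separately: that the rounded coloring $\lambda$ assigns at most $\frac{1}{\delta}\cdot b$ additional colors in total (the budget factor $\beta = 1/\delta$), and that it makes at most $\left(\frac{b+2}{1-\delta}+1\right)$ times the optimal number of mistakes (the objective factor $\alpha$). Since the LP in~\eqref{eq:goecc} is a relaxation of the \GOECC{} ILP, its value $\sum_{e} x_e$ lower bounds the optimum, so it suffices to bound $|\mathcal{M}_\lambda|$ against $\sum_e x_e$. Throughout I write $\hat{y}_v = \lfloor y_v \rceil_\delta$, so that $\rho_v = \frac{1-\delta}{\hat{y}_v + 2}$, and I will lean on the three rounding properties~\eqref{eq:prop1}--\eqref{eq:prop3} to convert the fractional budget variables $y_v$ into integral color counts with controlled blowup.

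For the budget, fix a node $v$ and let $m_v = |\lambda(v)|$ be the number of colors assigned, i.e.\ the number of $c$ with $x_v^c < \rho_v$. Bounding the $m_v$ assigned coordinates strictly by $\rho_v$ and the remaining $k - m_v$ coordinates by $1$ gives $\sum_c x_v^c < m_v \rho_v + (k - m_v)$, which combined with the LP constraint $\sum_c x_v^c \geq k - y_v - 1$ yields $m_v(1-\rho_v) < y_v + 1$. Substituting $\rho_v$ and simplifying, $(\hat{y}_v + 2)(1-\rho_v) = \hat{y}_v + 1 + \delta$, so the target inequality $m_v \leq \hat{y}_v + 1$ (equivalently $m_v < \hat{y}_v + 2$, as $m_v$ is an integer) reduces exactly to $y_v \leq \hat{y}_v + \delta$, which is property~\eqref{eq:prop1}. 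Hence each node receives at most $\hat{y}_v$ additional colors; summing and applying property~\eqref{eq:prop3} together with $\sum_v y_v \leq b$ gives a total of at most $\sum_v \hat{y}_v \leq \frac{1}{\delta}\sum_v y_v \leq \frac{b}{\delta}$, the claimed budget factor.

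For the objective, consider any mistake edge $e$ with $\ell(e) = j$. By definition some $v \in e$ is not assigned $j$, so $x_v^j \geq \rho_v$, and the LP constraint $x_v^j \leq x_e$ gives $x_e \geq \rho_v = \frac{1-\delta}{\hat{y}_v + 2}$. Because $y_v \leq \sum_u y_u \leq b$ and $b$ is a nonnegative integer, $\hat{y}_v \leq \lceil y_v \rceil \leq b$, so $x_e \geq \frac{1-\delta}{b+2}$ and thus $1 \leq \frac{b+2}{1-\delta}\, x_e$. Summing over mistake edges yields $|\mathcal{M}_\lambda| \leq \frac{b+2}{1-\delta}\sum_{e} x_e$, which is within $\frac{b+2}{1-\delta}$ of the optimum. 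I also need to confirm feasibility: a node with $m_v = 0$ violates the \GOECC{} requirement $|\lambda(v)| \geq 1$, so I would assign any such node its nearest color (smallest $x_v^c$); this consumes only the free first-color slot, and since adding colors never unsatisfies an edge, it never increases $|\mathcal{M}_\lambda|$.

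The delicate step is the budget argument, where the specific choice $\rho_v = (1-\delta)/(\hat{y}_v + 2)$ is engineered precisely so that the per-node bound $m_v \leq \hat{y}_v + 1$ collapses to property~\eqref{eq:prop1}; a less carefully tuned threshold breaks this exact cancellation, and the interplay between the fractional constraint $\sum_c x_v^c \geq k - y_v - 1$ and the rounding operator is what makes the blowup exactly $1/\delta$. The objective bound is then essentially routine given $\hat{y}_v \leq b$, and I expect any residual additive slack in $\alpha$ (the $+1$ beyond $\frac{b+2}{1-\delta}$) to be absorbed by a conservative treatment of this feasibility cleanup or of the bound $\hat{y}_v \leq b$.
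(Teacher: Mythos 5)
Your proof is correct and takes essentially the same route as the paper's: the budget factor via the per-node contradiction argument built on property~\eqref{eq:prop1} and summed with property~\eqref{eq:prop3}, and the objective factor via $x_e \geq x_v^{\ell(e)} \geq \rho_v$ for every mistake edge. The only substantive deviations are to your credit: you bound $1/\rho_v$ using integrality of the budget ($\lfloor y_v \rceil_\delta \leq \lceil y_v \rceil \leq b$) where the paper uses property~\eqref{eq:prop2}, which yields the slightly stronger factor $\frac{b+2}{1-\delta}$ without the additive $+1$, and you explicitly repair feasibility for nodes receiving no color under the threshold rule (the constraint $|\lambda(v)| \geq 1$), a detail the paper's proof leaves implicit.
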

\begin{proof}
	This rounding scheme will assign at most $\lfloor y_v \rceil_\delta + 1$ colors to node $v$. If not, then $x_v^c < \rho_v$ for at least $\lfloor y_v \rceil_\delta + 2$ choices of the color $c$, so using the inequality in~\eqref{eq:prop1} we reach a contradiction to the LP constraints:
	\begin{align*}
		\sum_{c = 1}^k x_v^c &< (\lfloor y_v \rceil_\delta + 2)\rho_v + k - \lfloor y_v \rceil_\delta - 2 \\
		&= (1-\delta) + k - \lfloor y_v \rceil_\delta - 2 \\
		&= k - (\lfloor y_v \rceil_\delta + \delta) - 1 \leq k - y_v - 1.
	\end{align*}
	Therefore, since each node is assigned one color without using any of the budget $b$, property~\eqref{eq:prop3} shows this rounding scheme assigns an extra $\sum_{v} \lfloor y_v \rceil_\delta \leq \frac{1}{\delta} \sum_{v} y_v \leq \frac{1}{\delta} b$ color assignments, violating the budget by at most a factor $1/\delta$.
	
	If we make mistake at $e \in E$, then $x_e \geq x_v^{\ell(e)} \geq \rho_v$ for some $v \in V$, so this mistake is within a factor
	\begin{align*}
		\frac{1}{\rho_v} = \frac{ \lfloor y_v \rceil_\delta + 2}{1-\delta} \leq \frac{ y_v + (1-\delta) + 2}{1-\delta} \leq \frac{(b+2)}{1-\delta} + 1
	\end{align*}
	of the LP lower bound for that edge.
\end{proof}
By setting $\delta = \frac{1}{2}$ in the above theorem, we obtain a bicriteria $(2b+5,2)$-approximation. We can also come arbitrarily close to satisfying budget constraints as long as we are willing to make the approximation to the objective function worse. In general, the fact that the overlap budget is shared by all nodes makes it more challenging to obtain good approximations for \GOECC{}. Unlike our LP algorithms for \LOECC{}, there is no choice of threshold $\delta$ that leads to a single-criteria approximation algorithm (i.e., no violation of the budget), nor a setting that produces a constant-constant bicriteria approximation for \GOECC{}. Exploring alternate rounding techniques and relaxations that achieve these goals is therefore a natural direction for future research.

\paragraph{\RECC{} LP Algorithms} Finally, we present the following LP relaxation for \RECC{}:
\begin{align}
	\label{eq:recc}
	\begin{aligned}
		\text{min} \quad& \textstyle  \sum_{e \in E} x_e \\
		\text{s.t.} \quad& \forall v \in V:\\
		&\forall c \in [k], e \in E_c:\\
		&\\
		&x_v^c, x_e, z_v \in [0, 1]\\
		& y_v \geq 0 
	\end{aligned}
	\begin{aligned}
		&\\
		&\textstyle \sum_{c=1}^k x_v^c \geq k - 1 \\
		\quad &x_v^c - z_v \leq x_e \quad \forall  v \in e \\
		&\textstyle \sum_{v \in V} z_v \leq b\\
		&\forall  c \in [k], v \in V, e \in E\\
		&\forall  v \in V.
	\end{aligned}
\end{align}
Similar to our LPs for overlapping ECC, constraining variables $\{x_v^c, x_e, z_v\}$ to be binary produces an ILP that exactly encodes \RECC{}. The variable $z_v$, when binary, encodes whether node $v$ should be deleted ($z_v = 1$) or not ($z_v = 0$). If we delete this node, then every edge $e \ni v$ is not violated even if $v$ is not assigned color $\ell(e)$, which is captured by the constraint $x_v^c - z_v \leq x_e$.
\begin{theorem}
	\label{thm:RLP}
	Let $\{x_v^c, x_e, z_v \colon e \in E, v \in V\}$ denote optimal LP variables for the LP in~\eqref{eq:recc}. For $\varepsilon \in (0,\frac12)$, let $\lambda$ encode a coloring where $v \in V$ is deleted if $z_v \geq \varepsilon$ and is given color $c \in [k]$ if $x_v^c < \frac12$. Then $\lambda$ is a bicriteria $\left(\frac{2}{1-2\varepsilon}, \frac{1}{\varepsilon} \right)$-approximation for \RECC{}.
\end{theorem}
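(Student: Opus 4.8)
The plan is to split the analysis into a budget-violation bound and an objective approximation, mirroring the proofs of Theorem~\ref{thm:LLP} and Theorem~\ref{thm:GLP}. The facts I would rely on are that the LP in~\eqref{eq:recc} relaxes the ILP encoding \RECC{}, so $\sum_e x_e$ lower bounds the optimal number of mistakes, and that the coupling constraint $x_v^c - z_v \le x_e$ forces an edge to pay whenever an incident node is far from the edge's color and is not deleted. Before bounding either quantity, I would check that the rounding yields a valid \RECC{} coloring: a surviving node $v$ gets every color $c$ with $x_v^c < \tfrac12$, and it gets at most one such color, since two colors $c_1 \ne c_2$ would give $x_v^{c_1} + x_v^{c_2} < 1$ and hence $\sum_c x_v^c < k-1$, contradicting $\sum_c x_v^c \ge k-1$. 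A node left with no color may be assigned an arbitrary single color without creating new mistakes.

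For the budget, the deleted set is $\{v : z_v \ge \varepsilon\}$, and a Markov-style estimate gives $\varepsilon \cdot |\{v : z_v \ge \varepsilon\}| \le \sum_v z_v \le b$, so at most $b/\varepsilon$ nodes are deleted and the budget is violated by a factor at most $1/\varepsilon$.

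For the objective I would charge each mistaken edge to its own LP variable. If $\lambda$ makes a mistake at $e$ with $j = \ell(e)$, then some surviving node $v \in e$ lacks color $j$, so $z_v < \varepsilon$ and $x_v^j \ge \tfrac12$; the coupling constraint then gives $x_e \ge x_v^j - z_v > \tfrac12 - \varepsilon = \tfrac{1-2\varepsilon}{2}$. Hence $\mathbb{1}(e \in \mathcal{M}_\lambda) \le \tfrac{2}{1-2\varepsilon} x_e$, and summing over $e$ bounds the number of mistakes by $\tfrac{2}{1-2\varepsilon} \sum_e x_e$, i.e.\ by $\tfrac{2}{1-2\varepsilon}$ times the optimum.

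Each individual step is short, so the place to be careful is the interplay between the two thresholds: the deletion cutoff $\varepsilon$ must lie strictly below the color cutoff $\tfrac12$ so that a surviving, mis-colored node still certifies $x_e$ bounded away from zero. This is precisely why the hypothesis demands $\varepsilon \in (0, \tfrac12)$, and it explains why $1-2\varepsilon$ drives the objective factor while $\varepsilon$ alone drives the budget factor.
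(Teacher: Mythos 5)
Your proposal is correct and follows essentially the same route as the paper's proof: the same Markov-style bound $\varepsilon\,|\{v : z_v \ge \varepsilon\}| \le \sum_v z_v \le b$ for the budget, and the same per-edge charging via $x_e \ge x_v^{\ell(e)} - z_v > \frac12 - \varepsilon$ against the LP lower bound. Your explicit check that each surviving node receives at most one color (via $\sum_c x_v^c \ge k-1$) and your handling of uncolored nodes just spell out a step the paper states in one sentence, so nothing differs in substance.
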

\begin{proof}
	The fact that $v$ is given color $c$ only if $x_v^c < \frac12$ implies that each node will be assigned at most one color. If $\lambda$ makes a mistake at an edge $e \in E$, this means that there is some $v \in e$ that is not deleted and is not given color $\ell(e)$, so
	\begin{align*}
		x_e \geq x_v^{\ell(e)} - z_v > \frac{1}{2} - \varepsilon = \frac{1- 2\varepsilon}{2}.
	\end{align*}
	Thus, we can pay for all edge mistakes within a factor $2/(1-2\varepsilon)$ of the LP lower bound. Define $\hat{z}_v = 1$ if we delete node $v$ (i.e., $z_v \geq \varepsilon)$ and let $\hat{z}_v = 0$ otherwise. Observe that $\hat{z}_v \leq (1/\varepsilon) z_v$, so the number of deleted nodes is $\sum_{v} \hat{z}_v \leq \frac{1}{\varepsilon} \sum_{v} z_v \leq \frac{1}{\varepsilon} b$ which proves the desired bound on the budget violation.
\end{proof}
If we set $\varepsilon$ to a small constant, we can get constant-constant bicriteria approximation guarantees. For example, $\varepsilon = 1/3$ produces a $(6,3)$-approximation and $\varepsilon = 1/4$ gives a $(4,4)$-approximation. 
It is worth noting that it is actually impossible to obtain a single-criteria approximation for \RECC{} by rounding the LP relaxation. To see why, consider a hypergraph $H$ with four nodes $V = \{v_1,v_2,v_3,v_4\}$ and two hyperedges $e_1 = \{v_1,v_2,v_3\}$ and $e_2 = \{v_2,v_3,v_4\}$ with unique colors $1$ and $2$. When $b = 1$, the optimal \RECC{} solution makes one mistake, but the LP relaxation obtains an objective score of 0 by setting $x_{v_2}^1 = x_{v_2}^2 = x_{v_3}^1 = x_{v_3}^2 = z_{v_2} = z_{v_3} = 1/2$. Because every labeling that respects the budget will make at least one mistake, the ratio between the number of mistakes and the LP lower bound is infinite. 

\section{Parameterized Complexity}
\label{sec:hardness}

Our parameterized complexity results focus on the decision versions of \LOECC{}, \GOECC{}, and \RECC{}.
The input in these cases is an edge-colored hypergraph $H = (V,E,\ell)$ and two non-negative integers $b$ and $t$.
\LOECC{} asks whether there exists a subset $X \subseteq E$ of size $|X| = t$ such that every edge in $E \setminus X$ can be satisfied by assigning at most $b$ colors to each node.
Decision versions for \GOECC{} and \RECC{} are defined similarly using their corresponding constraints on $\lambda$.
We use a tuple $(H,t,b)$ to denote an instance of a decision problem.

All three problems are XP in $t$ (i.e., admit an $O(|H|^{f(t)})$ algorithm).
One such algorithm exhaustively tries every way to remove $t$ edges and checks if coloring (or removing for \RECC{}) the nodes to satisfy the remaining edges violates the budget.

\subsection{W[1]-Hardness}

\LOECC{}, \GOECC{}, and \RECC{} are all para-NP-hard (implying W[1]-hardness) with respect to $b$, since standard \ECC{} is already NP-hard (i.e., $b = 0$ for \GOECC{} and \RECC{}, and $b = 1$ for \LOECC{}).
This connection also rules out FPT algorithms in various other
potentially interesting parameters, including
$k$, $r$, max degree, vertex-cover number, and treewidth~\cite{kellerhals2023parameterized}. Here, we show that each of our problems is
W[1]-hard in the natural parameter $t$. For \LOECC{}, we give a reduction from \SCov{}, which is
W[2]-hard
with respect to the size
of the cover~\cite{flum2006parameterized}. 
W[2]-hardness implies W[1]-hardness by definition~\cite{flum2006parameterized}.

\problembox{Set Cover}
{a set of elements $\mcu = \{x_1, \dots, x_n\}$, a family of sets $\mcf = \{S_1, \dots, S_m\}$, and an integer $t \in \mathbb{N}$}
{is there a subset $Y \subseteq \mcf$ such that $|Y| \leq t$ and $\bigcup_{S \in Y} S = \mcu$?}



\begin{theorem}
\label{thm:loecc-w2}
    \LOECC{} is W[2]-hard with respect to $t$.
\end{theorem}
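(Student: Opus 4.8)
The plan is to reduce from \SCov{} to the decision version of \LOECC{}, mapping a \SCov{} instance $(\mcu, \mcf, t)$ to an \LOECC{} instance $(H, t', b)$ in which satisfying edges cheaply forces us to ``choose'' a small collection of sets covering all elements. I would construct $H$ as follows: create one node $v_S$ for each set $S \in \mcf$, and for each element $x_i \in \mcu$ introduce a color $i$ together with a family of edges of color $i$ that encode which sets contain $x_i$. The key design goal is that assigning color $i$ to node $v_S$ should correspond to ``putting set $S$ into the cover and using it to cover element $x_i$''. Since \LOECC{} lets each node take up to $b$ colors, I would set $b$ so that a single node can absorb all the colors of elements in its set; concretely, taking $b = \max_S |S|$ (or simply $b = n$) lets a node $v_S$ be assigned exactly the colors of the elements it contains. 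The parameter $t'$ in the \LOECC{} instance will be tied to the cover size $t$, and a rescaling/padding trick (many parallel copies of each element-color edge) will be used so that it is never worthwhile to leave an ``element edge'' unsatisfied — forcing the solution to genuinely cover every element rather than pay to skip some.

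The core of the construction is a gadget that converts the combinatorial choice ``which sets are in the cover'' into a bounded number of edge mistakes. I would add, for each set $S$, a ``selection edge'' (or a heavily-weighted bundle of edges) whose satisfaction costs one mistake precisely when $v_S$ is forced to carry any color at all, so that the total number of mistakes equals the number of chosen sets. To make the element-covering edges free to satisfy, I would make each color-$i$ edge consist only of nodes $v_S$ with $x_i \in S$ (so it can be satisfied by assigning color $i$ to just one such node), and duplicate these element edges enough times that skipping any one of them costs more than the budget of allowable mistakes $t'$. Then a clustering with at most $t'$ mistakes must satisfy every element edge, which requires that for each element some containing set's node receives color $i$; the set of nodes carrying at least one color then forms a set cover, and the mistake-count charged by the selection edges bounds its size by $t$. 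The forward direction is the routine check that a size-$t$ cover yields a coloring with the promised number of mistakes; the reverse direction extracts a cover from a low-cost coloring.

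Since the reduction is polynomial in $|\mcu| + |\mcf|$ and maps the cover-size parameter $t$ to the mistake parameter $t'$ with $t' = f(t)$ a function of $t$ alone (indeed $t' = t$ up to the additive padding constant), it is a parameterized reduction, and W[2]-hardness of \SCov{} in the cover size transfers to W[2]-hardness of \LOECC{} in $t$.

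\medskip

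The step I expect to be the main obstacle is getting the budget $b$ and the gadget interaction exactly right so that the \LOECC{} optimum corresponds to the minimum set cover and not to some cheaper ``cheating'' coloring. Two subtleties require care. First, because a node may take up to $b$ colors, I must ensure that using a node $v_S$ to cover several elements still counts as selecting $S$ only once in the mistake budget; this is why the per-set selection edge must charge for the \emph{event} that $v_S$ is nonempty, independent of how many colors it holds, rather than charging per color. Second, I must guarantee that it is never profitable to satisfy an element edge by some unintended means or to delete element edges wholesale — which is exactly what the duplication/padding of element edges enforces, since paying for even one duplicated bundle would blow the mistake budget $t'$. Verifying both directions of the equivalence under these constraints, and confirming that the padding keeps the instance size polynomial while preserving $t$ as the parameter, is where the technical weight of the argument lies.
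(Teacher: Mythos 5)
Your reduction has a fatal flaw at its core: it relies on the wrong satisfaction semantics. You claim that an element edge of color $i$ consisting of the nodes $\{v_S : x_i \in S\}$ ``can be satisfied by assigning color $i$ to just one such node.'' In \ECC{} (and hence \LOECC{}), the opposite holds: an edge $e$ is satisfied only if \emph{every} node of $e$ is assigned $\ell(e)$, and a mistake occurs as soon as \emph{one} node lacks the color. So once your padding forces all element edges to be satisfied, every node $v_S$ with $x_i \in S$ must carry color $i$ --- the gadget encodes ``select \emph{all} sets containing $x_i$,'' not ``select at least one.'' Every set node then carries colors, the selection edges all charge, the mistake count is (roughly) $m$ regardless of the minimum cover size, and yes- and no-instances of \SCov{} are not distinguished. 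This existential-versus-universal mismatch is precisely the obstacle any such reduction must overcome, and your construction does not. A secondary gap: the per-set selection edge is supposed to cost one mistake ``precisely when $v_S$ is nonempty, independent of how many colors it holds,'' but \LOECC{} charges only for edge mistakes and lets every node take up to $b$ colors for free; with $b = \max_S |S|$, a singleton edge $\{v_S\}$ of a fresh color is violated only when $v_S$ spends all $b$ slots on element colors, so there is no way to implement the charging scheme your accounting requires.

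The paper's proof also reduces from \SCov{}, but it flips the roles so that the existential covering constraint becomes a budget constraint rather than an edge-satisfaction constraint. Elements become nodes $u_j$; each set $S_i$ becomes a hyperedge $e_i$ of color $i$ containing a set node $v_i$ and the nodes of its elements; and for each $x_j \notin S_i$ an auxiliary two-node edge $\{u_j, v_i\}$ of color $i$ is added, so that every element node is incident to exactly $m$ edges of $m$ distinct colors. With budget $b = m-1$, an element node can satisfy its incident edges only if at least one of them is deleted --- that is the ``at least one'' your gadget was missing --- and deleting the set hyperedge $e_i$ relieves all elements of $S_i$ simultaneously, so the at most $t$ deleted edges read off a cover of size at most $t$. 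If you want to salvage your set-nodes/element-edges orientation, you would need some different mechanism (not edge satisfaction) to express disjunction; within this framework the natural one is exactly the color-budget trick the paper uses.
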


\begin{proof}
    Given an instance $(\mcu, \mcf, t)$ of \SCov{}, let $(H, t, b=m-1)$ be the following instance of \LOECC{}.
    For each element $x_j \in \mcu$, add the node $u_j$ to $H$.
    For each set $S_i \in \mcf$, add the node $v_i$ to $H$.
    Additionally, add the hyperedge $e_i$ with color $i$ to $H$ containing $v_i$ and the vertices $\{u_j : x_j \in S_i\}$.
    Finally, for each node in $\{u_j : x_j \not\in S_i\}$, add a hyperedge containing $u_j$ and $v_i$ with color $i$.
    Note that each node $u_j$ participates in $m$ hyperedges with $m$ different colors.
    We argue that $(H, t, b)$ is a yes-instance of \LOECC{} if and only if $(\mcu, \mcf, k)$ is a yes-instance of \SCov{}.

    First, assume that $(\mcu, \mcf, t)$ is a yes-instance of \SCov{}, and let $Y \subseteq \mcf$ be a feasible solution.
    We claim that $X = \{e_i : S_i \in Y\}$ is a feasible solution for $(H, t, b)$.
    By construction, $|X| \leq t$.
    Consider whether the vertices of $H$ can be assigned colors to satisfy the remaining hyperedges.
    Each node $v_i$ is only contained in hyperedges of color $i$ and thus only needs one color.
    Since $Y$ is a feasible set cover, each node $u_j$ must be contained by a hyperedge in $X$.
    Thus, $u_j$ participates in at most $m-1$ hyperedges and only needs $b=m-1$ colors.
    As a result, $X$ is a feasible solution, and $(H, t, b)$ is a yes-instance of \LOECC{}.

    Now, assume that $(H, t, b)$ is a yes-instance of \LOECC{}, and let $X$ be a feasible solution.
    Let $Y = \{S_i : e_i \in X\}$.
    For the remaining hyperedges in $X$ of the form $\{v_i, u_j\}$, add any $e_{i'}$ to $Y$ such that $x_j \in S_{i'}$.
    By construction, $|Y| \leq t$.
    Suppose that $Y$ is not a set cover, and there exists an uncovered element $x_j \in \mcu$.
    By our choice of $Y$, no hyperedges containing $u_j$ appear in $X$.
    However, since $u_j$ participates in $m$ hyperedges with $m$ different colors, assigning only $b$ colors to $u_j$ cannot satisfy all incident hyperedges.
    This contradicts that $X$ was a feasible solution, and so $Y$ must also be feasible.
    Thus, $(\mcu, \mcf, k)$ is a yes-instance of \SCov{}.
\end{proof}

To prove W[1]-hardness for \GOECC{} and \RECC{}, we give a reduction from \PVC{},
which is W[1]-hard with respect to the size of the cover~\cite{guo2007parameterized}.

\problembox{Partial Vertex Cover}
{a graph $G$ and non-negative integers $t$ and $s$}
{is there a set $Y$ of at most $t$ vertices which cover at least $s$ edges?}

\begin{theorem} \label{thm:goecc-w1}
    \GOECC{} is W[1]-hard with respect to $t$.
\end{theorem}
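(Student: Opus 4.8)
The plan is to give a parameterized reduction from \PVC{}, mirroring the incidence‑style idea behind the \SCov{} reduction in Theorem~\ref{thm:loecc-w2}. The conceptual obstacle to keep in mind throughout is that the target value $s$ of a \PVC{} instance (and hence the number of $G$‑edges one may leave uncovered) is \emph{unbounded} in the parameter $t$, so we cannot afford to charge uncovered edges as mistakes. The key idea is therefore to encode the \emph{choice} of $t$ vertices as exactly $t$ edge deletions (mistakes), while pushing the coverage requirement ``cover at least $s$ edges'' into the \emph{shared} color budget $b$, which is allowed to be unbounded since $b$ is not the parameter.

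Concretely, given a \PVC{} instance $(G,t,s)$ with $G=(V_G,E_G)$ and $M=|E_G|$, I construct $H$ as follows. I introduce one color $a_u$ for each vertex $u\in V_G$, and one node $p_f$ for each edge $f\in E_G$. For each vertex $u$ I add a single hyperedge $\beta_u$ of color $a_u$ whose node set is $\{p_f : u\in f\}$, i.e. all edge‑nodes incident to $u$. Thus each $p_f$ with $f=\{u,w\}$ lies in exactly the two hyperedges $\beta_u$ and $\beta_w$, whose colors $a_u\neq a_w$ differ. I set the \GOECC{} parameters to $t$ (mistakes) and $b=M-s$ (shared budget). Selecting a vertex $u$ corresponds to deleting $\beta_u$: this one mistake simultaneously ``relieves'' every incident edge‑node $p_f$, just as deleting $e_i$ relieved all elements of $S_i$ in Theorem~\ref{thm:loecc-w2}.

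I then verify both directions. Forward: from a partial cover $Y$ with $|Y|\le t$ I delete $X=\{\beta_u : u\in Y\}$; every $p_f$ of a covered edge then lies in at most one surviving hyperedge and is satisfied by its single free color, while each uncovered $f=\{u,w\}$ forces $p_f$ to carry both $a_u$ and $a_w$, spending one unit of budget. Since at most $M-s$ edges are uncovered, $b=M-s$ suffices, and every node keeps at least its free color. Reverse: every hyperedge of $H$ is some $\beta_u$, so any deletion set $X$ induces $Y=\{u:\beta_u\in X\}$ with $|Y|\le t$; for any $G$‑edge $f=\{u,w\}$ left uncovered by $Y$, both $\beta_u,\beta_w$ survive and must be satisfied, forcing $p_f$ to receive the two distinct colors $a_u,a_w$ and hence contribute at least one to $\sum_v(|\lambda(v)|-1)$.

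The step requiring the most care is the reverse‑direction budget accounting: I must rule out a coloring that ``shares'' budget across uncovered edges, which holds because each uncovered $G$‑edge owns a private node $p_f$ that is independently forced to hold two colors, so the contributions are disjoint and sum to at least the number of uncovered edges; the bound $b=M-s$ then caps uncovered edges by $M-s$, giving coverage $\ge s$. A minor technical point is reconciling the decision version's exact‑size condition $|X|=t$ with my $\le t$ deletions, handled by noting that extra deletions only relax the surviving instance (or by padding with trivially satisfiable hyperedges). Since the reduction is polynomial and the target mistake bound equals $t$, the W[1]‑hardness of \PVC{} in $t$ transfers to \GOECC{}.
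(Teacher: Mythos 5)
Your proposal is correct and is essentially identical to the paper's own proof: the same reduction from \PVC{} mapping $G$-edges to nodes, $G$-vertices to uniquely-colored hyperedges, with mistake budget $t$ and shared color budget $b=m-s$, and the same two-direction accounting (covered edges need only the free color; uncovered edges each privately force one extra assignment). Your added remarks on the disjointness of budget contributions and the exact-size versus at-most-$t$ deletion set are minor refinements of what the paper leaves implicit.
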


\begin{proof}
    Given an instance $(G, t, s)$ of \PVC{} with $m$ edges, construct a \GOECC{} instance $(H, t, b=m-s)$ by mapping edges to nodes and vertices to hyperedges.
    For each edge $f \in G$, add the node $u_f$ to $H$.
    For each vertex $v \in G$, add the hyperedge $e_v$ containing the nodes $u_f$ corresponding to edges $f$ incident to $v$ in $G$.
    Each hyperedge is given a unique color.
    Note that each node $u_f$ participates in exactly two hyperedges.
    We argue that $(H, t, b)$ is a yes-instance of \GOECC{} if and only if $(G, t, s)$ is a yes-instance of \PVC{}.

    First, assume $(G, t, s)$ is a yes-instance of \PVC{}, and let $Y$ be a feasible solution.
    We claim that $X = \{e_v : v \in Y\}$ is a feasible solution for $(H, t, b)$.
    By construction, $|X| \leq t$.
    In $H \setminus X$, a node $u_f$ only needs more than one color if the edge $f$ is uncovered by $Y$ in $G$.
    Since $Y$ covers at least $s$ edges, there are at most $b=m-s$ of these nodes.
    Moreover, since each node needs at most two colors, we need at most $b$ additional color assignments to satisfy every hyperedge.
    Therefore, $X$ is a feasible solution, and $(H, t, b)$ is a yes-instance of \GOECC.

    Now, assume $(H, t, b)$ is a yes-instance of \GOECC{}, and let $X$ be a feasible solution.
    Let $Y = \{v : e_v \in X\}$.
    By construction, $|Y| \leq t$.
    Suppose that $Y$ is not a feasible partial vertex cover, and there exists a set $U$ of $m-s+1$ uncovered edges in $G$.
    For every edge $f = (v, w) \in U$, neither $e_v$ nor $e_w$ are in $X$.
    Thus, $u_f$ appears in hyperedges with two different colors, and an additional color assignment is necessary to satisfy both $e_v$ and $e_w$.
    However, then we must use a global budget of at least $m-s+1 > b$ colors, contradicting that $X$ is a feasible solution.
    Therefore, $Y$ covers at least $s$ edges, and $(G, t, s)$ is a yes-instance of \PVC{}.
\end{proof}

\begin{theorem}
\label{thm:recc-w1}
   \RECC{} is W[1]-hard with respect to $t$.
\end{theorem}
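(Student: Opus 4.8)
The plan is to reduce from \PVC{} using essentially the same construction as in the proof of Theorem~\ref{thm:goecc-w1}, exploiting the fact that in that gadget each node lies in exactly two hyperedges. Given a \PVC{} instance $(G,t,s)$ with $m$ edges, I would build the \RECC{} instance $(H, t, b = m-s)$ by introducing a node $u_f$ for every edge $f$ of $G$, a uniquely-colored hyperedge $e_v$ for every vertex $v$ of $G$, and placing $u_f$ into $e_v$ exactly when $f$ is incident to $v$. As before, each node $u_f$ appears in precisely the two hyperedges corresponding to the endpoints of $f$, and the parameter $t$ is preserved.

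The core observation that makes the same construction work for \RECC{} is that, in this gadget, deleting a node accomplishes exactly what a single extra color assignment accomplished for \GOECC{}: since $u_f$ belongs to only two hyperedges (with two distinct colors), a conflict at $u_f$ can be resolved either by deleting $u_f$ (\RECC{}) or by giving it a second color (\GOECC{}), and both cost one unit of the corresponding budget. A node $u_f$ is forced into such a conflict precisely when both hyperedges $e_v, e_w$ survive, i.e.\ when the edge $f = (v,w)$ is left uncovered by the selected vertices.

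For the forward direction I would take a partial cover $Y$ hitting at least $s$ edges and set $X = \{e_v : v \in Y\}$, so $|X| \leq t$. Every surviving node $u_f$ with $f$ covered lies in at most one remaining hyperedge and can be colored with that color; the at most $m - s = b$ nodes corresponding to uncovered edges are deleted, staying within budget. For the backward direction I would set $Y = \{v : e_v \in X\}$ and argue by contradiction: if $Y$ covered fewer than $s$ edges, then at least $m - s + 1 = b + 1$ edges would be uncovered, each forcing its (distinct) node $u_f$ to be deleted in order to satisfy the two surviving conflicting hyperedges, exceeding the deletion budget $b$.

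I expect no genuine obstacle beyond the one subtlety worth stating explicitly: node deletion, which in principle repairs all incident edges at once, is no stronger than a budgeted second color in this instance, precisely because every node has at most two incident hyperedges. This bounded-degree feature of the gadget is what keeps the $b = m - s$ accounting tight in both directions; if nodes could appear in many hyperedges, deletion would strictly dominate an extra color and the clean equivalence with \PVC{} would break.
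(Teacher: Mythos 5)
Your proposal is correct and is exactly the paper's argument: the paper proves Theorem~\ref{thm:recc-w1} by noting that the reduction of Theorem~\ref{thm:goecc-w1} carries over verbatim because, in that gadget, deleting a node has the same effect as assigning it a second color. Your write-up simply makes explicit the point the paper leaves implicit --- that each node $u_f$ lies in exactly two hyperedges, so deletion and a second color assignment are interchangeable one-unit budget operations --- which is precisely the right justification.
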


The proof proceeds identically to Theorem~\ref{thm:goecc-w1} since removing a node has the same effect as assigning a second color.

\subsection{FPT Algorithms}

Since \LOECC{}, \GOECC{}, and \RECC{} are not FPT with respect to either $t$ or $b$ individually, we now consider parameterizing by $t+b$.
We give branching algorithms which work by defining and resolving \emph{conflicts}: nodes which cannot satisfy all of their incident hyperedges without deleting a hyperedge or using an extra color.

\begin{theorem}
    \LOECC{} is FPT with respect to $t+b$.
\end{theorem}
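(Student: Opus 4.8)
The plan is to give a bounded search tree (branching) algorithm. First I would recast the problem purely in terms of chromatic degrees. I claim that $(H,t,b)$ is a yes-instance of \LOECC{} if and only if there is a set $X \subseteq E$ with $|X| \le t$ such that in the hypergraph $H \setminus X$ every node has chromatic degree at most $b$. For the forward direction, given a feasible coloring $\lambda$ with $|\lambda(v)| \le b$, take $X = \mathcal{M}_\lambda$; every surviving edge incident to a node $v$ has its color in $\lambda(v)$, so the number of distinct colors appearing on $v$'s surviving incident edges is at most $|\lambda(v)| \le b$. Conversely, if $H \setminus X$ has all chromatic degrees at most $b$, I assign each node the (at most $b$) colors appearing on its surviving incident edges; this satisfies every surviving edge and respects the overlap budget, so at most $|X| \le t$ edges are unsatisfied.

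Next I would describe the branching. Given the current surviving hypergraph and remaining deletion budget, search for a \emph{conflict node}: a node $v$ whose chromatic degree exceeds $b$. If none exists, accept. Otherwise pick $b+1$ distinct colors incident to $v$ and, for each, one incident edge of that color, yielding edges $e_0, \dots, e_b$ with pairwise distinct colors. The key observation is that in any feasible solution at least one of these edges must be deleted: since $|\lambda(v)| \le b$ but these edges carry $b+1$ distinct colors, some $\ell(e_i) \notin \lambda(v)$, and then $e_i$ (which contains $v$) is unsatisfied and hence lies in $X$. This justifies branching into $b+1$ subproblems, the $i$-th deleting $e_i$ and decrementing the budget by one, and rejecting a branch once the budget is exhausted while a conflict node remains.

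For correctness, the reformulation shows that accepting exactly when all chromatic degrees drop to at most $b$ within budget is correct, and the observation above shows the branching rule is exhaustive, so the algorithm accepts iff a feasible deletion set of size at most $t$ exists. For the running time, each branch deletes one edge, so the search tree has depth at most $t$ and branching factor at most $b+1$, giving at most $(b+1)^t$ leaves; since locating a conflict node and its $b+1$ witness edges takes time polynomial in $|H|$, the total running time is $(b+1)^t \cdot |H|^{O(1)}$, which is FPT in $t+b$.

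The main thing to get right is the branching observation — that it suffices to commit to deleting one of $b+1$ specific witness edges rather than reasoning about which colors $v$ ultimately keeps. The subtlety is that a single excluded color may be carried by many edges incident to $v$, but I only need one witness edge per color to force a deletion, which keeps the branching factor at $b+1$ rather than something larger. I would also note that ``exactly $t$'' and ``at most $t$'' deletions are interchangeable here, since padding a smaller solution with arbitrary extra deletions never creates new conflicts (removing edges only lowers chromatic degrees), so the at-most-$t$ formulation the algorithm solves settles the stated decision problem.
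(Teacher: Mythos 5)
Your proof is correct and takes essentially the same approach as the paper's: you use the identical notion of a conflict (a node together with $b+1$ incident hyperedges of pairwise distinct colors), the same $(b+1)$-way branching on which witness edge to delete, the same depth-$t$ search tree, and the same $(b+1)^t \cdot \mathrm{poly}(|H|)$ running time. Your chromatic-degree reformulation, the exhaustiveness argument for the branching rule, and the remark about ``exactly $t$'' versus ``at most $t$'' deletions simply spell out correctness details that the paper's terser proof leaves implicit.
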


\begin{proof}
    Given a \LOECC{} instance $(H, t, b)$, a \emph{conflict} is a node $v$ and $b+1$ incident hyperedges each with a unique color.
    If there are no conflicts, all edges can be satisfied since each node needs at most $b$ colors to satisfy all incident hyperedges.
    If $(H, t, b)$ contains a conflict, we branch on the $b+1$ possible deletions to resolve it.
    Each branch increases the number of deleted edges by one, so the search tree has depth at most $t$.
    Conflicts can be found in $O(r|E|)$ time by checking the set of incident hyperedges at each node.
    Thus, the algorithm runs in $O((b+1)^t r|E|)$ time which is FPT in $t+b$.
\end{proof}

\begin{theorem} \label{thm:goecc-fpt}
    \GOECC{} is FPT with respect to $t+b$.
\end{theorem}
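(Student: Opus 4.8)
The plan is to mirror the branching strategy used for \LOECC{}, but with a notion of conflict and a branching rule adapted to the fact that the overlap budget $b$ is now shared globally rather than enforced per node. Given an instance $(H,t,b)$, I would maintain a current edge set together with, for each node $v$, a partition of the colors appearing on its surviving incident edges into \emph{groups}; two surviving edges incident to a common node $v$ whose colors lie in different groups of $v$ constitute a \emph{conflict}. The key structural observation is that any feasible solution satisfies $\sum_{v}(|\lambda(v)|-1)\le b$, so a ``yes'' instance is exactly one where we can delete at most $t$ edges and then merge the remaining incident colors at each node into a single group using at most $b$ merges in total.

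When a conflict is present at a node $v$, say between edges $e_1,e_2$ with distinct colors, I would branch three ways: (1) delete $e_1$, (2) delete $e_2$, or (3) keep both edges and declare that $v$ carries both colors, which I implement by merging their two groups at $v$ and charging one unit of budget. The first two branches decrease $t$ by one and the third decreases $b$ by one, so every root-to-leaf path has length at most $t+b$; pruning whenever $t<0$ or $b<0$ and declaring success at any conflict-free leaf yields a search tree with branching factor $3$ and depth $t+b$. Since a conflict can be located in $O(r|E|)$ time by scanning each node's incident edges (maintaining groups by union--find), the total running time is $O(3^{t+b}\,r|E|)$, which is FPT in $t+b$.

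The main obstacle, and the step I would treat most carefully, is the budget accounting in branch (3). Because a node with $p$ surviving colors has $\binom{p}{2}$ conflicting pairs yet should cost only $p-1$ units, charging one unit per conflicting pair would badly overcount. The per-node grouping fixes this: each merge reduces the number of groups at $v$ by one, so collapsing $v$'s incident colors to a single group costs exactly $s_v-1$, where $s_v$ is the number of distinct surviving colors at $v$, which is precisely $|\lambda(v)|-1$. For soundness I would argue that at a conflict-free leaf all surviving edges at each node lie in one group, so setting $\lambda(v)$ to that group's colors (or any single color if $v$ has no surviving edge) satisfies every surviving edge, while the merges performed at $v$ number at least $s_v-1$; summing over nodes shows the realized extra color assignments are at most the $\le b$ merges performed.

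For completeness I would take any feasible $(X^*,\lambda^*)$ and follow the path that, at each selected conflict $(e_1,e_2)$, deletes $e_i$ if $e_i\in X^*$ and otherwise takes the merge branch. Every merge at $v$ then combines groups whose colors all lie in $\lambda^*(v)$ (both edges survive and are satisfied by $\lambda^*$), so the merges at $v$ number at most $|\lambda^*(v)|-1$ and their total is at most $\sum_v(|\lambda^*(v)|-1)\le b$; the deletions form a subset of $X^*$ and number at most $t$. Hence this path stays within both budgets, terminates at a conflict-free leaf, and the algorithm reports ``yes.'' Finally, since deleting additional edges never raises the required budget, a certificate using fewer than $t$ deletions can be padded to exactly $t$ whenever $t\le|E|$, matching the decision formulation.
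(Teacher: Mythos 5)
Your proof is correct, and it follows the same high-level strategy as the paper (branch on \emph{conflicts}, with search-tree depth bounded by $t+b$), but your branching rule and budget accounting genuinely differ. The paper defines a conflict as a node $v$ with incident edges $e, f$ whose distinct colors are both outside the current partial assignment $\lambda(v)$, and branches \emph{four} ways: assign $c_e$ to $v$, assign $c_f$ to $v$, delete $e$, or delete $f$; each color assignment is charged to $b$, and one free color per node is granted only at a conflict-free leaf (where unsatisfied edges are monochromatic at every node). This gives $O(4^{t+b}\, r|E|)$ time. You instead collapse the two assignment branches into a single ``keep both'' branch implemented as a union--find merge of color groups at $v$, giving branching factor $3$ and $O(3^{t+b}\, r|E|)$ time. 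Your accounting is the delicate part and you handle it correctly: charging one unit per \emph{merge} (rather than per conflicting pair) yields exactly $s_v - 1$ units to collapse $s_v$ colors at $v$, matching the cost $|\lambda(v)|-1$ in the \GOECC{} constraint, and your completeness argument — following $(X^*,\lambda^*)$ so that all merges at $v$ stay inside $\lambda^*(v)$ and hence number at most $|\lambda^*(v)|-1$ — is sound. The paper implicitly needs the analogous observation (a conflict requires two uncovered colors, so the last needed color at each node is never paid for), but sidesteps the group bookkeeping because the partial coloring $\lambda$ itself records the paid colors. In short: your route buys a smaller branching factor at the cost of slightly heavier machinery; the paper's is simpler to state but has a worse base in the exponent.
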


\begin{proof}
    Given an instance $(H, t, b)$ of \GOECC{}, recall that $\lambda(v)$ denotes the set of colors assigned to a node $v$.
    We define a \emph{conflict} to be a node $v$ and the incident hyperedges $e$ of color $c_e$ and $f$ of color $c_f$ such that $c_e \neq c_f$ and $c_e, c_f \not\in \lambda(v)$.
    First, we note that in an instance $(H, t, b)$ partially colored by $\lambda$ with no conflicts, any unsatisfied hyperedges must be monochromatic at each node.
    Thus, if $\lambda$ assigns at most $b$ colors, we can satisfy the remaining edges in $H$ in linear time by assigning one (free) color to each node.

    If $(H, t, b)$ does contain a conflict, we branch on the four possible ways to resolve it: assign the color $c_e$ to $v$, assign the color $c_f$ to $v$, delete $e$, or delete $f$.
    All four branches decrease either $t$ or $b$, and so the maximum depth of the search tree is at most $t+b$.
    Moreover, we can find conflicts in $O(r|E|)$ time by checking the set of incident hyperedges at each node.
    Thus, the algorithm runs in $O(4^{t+b} r|E|)$ time which is FPT in $t+b$.
\end{proof}

\begin{theorem}
\label{thm:recc-fpt}
    \RECC{} is FPT with respect to $t+b$.
\end{theorem}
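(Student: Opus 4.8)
The plan is to adapt the conflict-branching strategy from the proof of Theorem~\ref{thm:goecc-fpt} for \GOECC{}, adjusting the notion of a conflict and its resolutions to the robust setting. The crucial structural difference is that in \RECC{} a node cannot satisfy two incident edges of different colors by accumulating extra colors; the only way a single node can accommodate multiple incident colors is to be deleted (i.e., assigned every color). Consequently the branching will be on node and edge \emph{deletions} only, rather than partly on color assignments, which means one fewer branch than in the \GOECC{} argument.

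Concretely, I would maintain a partial solution consisting of a set $D$ of deleted nodes and a set $X$ of deleted edges, and define a \emph{conflict} to be a node $v \notin D$ incident to two surviving edges $e, f \notin X$ with $\ell(e) \neq \ell(f)$. First I would establish the base case: if no conflict exists, then every non-deleted node has all of its surviving incident edges in a single color, so assigning each such node that common color (and an arbitrary color to any node with no surviving incident edge) satisfies every edge of $E \setminus X$. The instance is then feasible precisely when $|X| \leq t$ and $|D| \leq b$.

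For the branching step, given a conflict $(v,e,f)$ I would recurse on three options: delete $v$ (decrementing the remaining node budget $b$), delete $e$, or delete $f$ (each decrementing the remaining edge budget $t$). The heart of the argument is completeness: I must show that any feasible solution is consistent with at least one branch. Restricting such a solution to this conflict, if it deletes $v$ then the first branch applies; otherwise $v$ receives a single color $c$, and since $\ell(e) \neq \ell(f)$ at most one of them equals $c$, so at least one of $e, f$ is unsatisfied at $v$ and must be deleted, matching the second or third branch. I expect this case analysis, together with verifying that recursing after a deletion preserves the invariant and that conflict-free leaves are colored correctly, to be the main (though essentially routine) obstacle.

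Finally I would bound the running time. Each branch reduces either $t$ or $b$ by one, so the search tree has depth at most $t+b$ and branching factor $3$. Since a conflict can be located in $O(r|E|)$ time by scanning the incident edges of each node, the total running time is $O(3^{t+b}\, r|E|)$, which is FPT in $t+b$.
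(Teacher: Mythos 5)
Your proposal is correct and is exactly the adaptation the paper intends: its own proof is a one-line remark that the \RECC{} algorithm proceeds as in Theorem~\ref{thm:goecc-fpt} with node deletion taking the place of extra color assignments, which is precisely your three-way conflict branching (delete $v$, delete $e$, or delete $f$). Your spelled-out completeness argument and the $O(3^{t+b}\, r|E|)$ bound simply fill in details the paper leaves implicit, with the branching factor dropping from $4$ to $3$ exactly as you observe.
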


The FPT algorithm for \RECC{} proceeds in the same manner as \GOECC{} in Theorem~\ref{thm:goecc-fpt}, the main difference being that a node deletion potentially resolves more conflicts than assigning a single additional color.

\subsection{Kernelization}

Finally, we observe that a single reduction rule leads to a kernel for each variant.
Specifically, we need only remove \emph{easy nodes}.
A node $v$ is easy if it can be colored without conflict (i.e.,\ only in hyperedges of one color for \GOECC{}/\RECC{} or at most $b$ colors for \LOECC{}).
Note that removing easy vertices may result in hyperedges with only one node.

\begin{theorem}
\label{thm:kernels}
    \LOECC{} admits a kernel with $rt$ vertices.
    \GOECC{} and \RECC{} admit a kernel with $rt+b$ vertices.
\end{theorem}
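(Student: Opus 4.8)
The plan is to establish, for each variant, that exhaustively removing easy nodes is a safe reduction rule and that a reduced yes-instance contains few vertices. First I would fix the definition from the surrounding text: for \LOECC{} an easy node is one incident to at most $b$ distinct edge-colors, and for \GOECC{}/\RECC{} one incident to edges of a single color. The first task is \emph{safeness}: deleting an easy node $v$ (and replacing every incident hyperedge $e$ by $e \setminus \{v\}$) yields an equivalent instance. The forward direction is immediate, since restricting any solution of the original instance to $V \setminus \{v\}$ can only turn unsatisfied edges into satisfied ones, hence cannot increase the number of deleted edges or the budget used. For the backward direction I would take a solution of the reduced instance and re-insert $v$, assigning it exactly the colors of its incident edges: for \LOECC{} these number at most $b$, so the overlap constraint is respected, and for \GOECC{}/\RECC{} this is the single free color, consuming no budget. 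Every edge $e \ni v$ is then satisfied precisely when its shrunken copy $e \setminus \{v\}$ was satisfied in the reduced solution, so the mistake count and the budget are unchanged.

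Next I would observe that removing an easy node never changes the set of colors incident to any other node (that node stays in the same edges with the same colors), so no new easy nodes are created and the rule reaches a fixed point in a single pass whose surviving vertices are exactly the originally non-easy ones. It then remains to bound their number in a yes-instance, so fix a feasible solution deleting an edge set $X$ with $|X| = t$. For \LOECC{}, each surviving node $v$ is incident to at least $b+1$ colors, yet after deleting $X$ it must be satisfiable with at most $b$ colors; this forces at least one incident edge of $v$ into $X$. Hence every vertex is incident to some edge of $X$, and since $\sum_{e \in X} |e| \le rt$, there are at most $rt$ vertices, which is the claimed kernel.

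For \GOECC{} and \RECC{} I would split the surviving vertices into those incident to a deleted edge and those not. The first group is again bounded by $\sum_{e \in X} |e| \le rt$. A vertex $v$ in the second group keeps all of its incident edges, and being non-easy it still meets at least two colors; in \GOECC{} covering them costs at least one unit of the shared color budget, and in \RECC{} it forces $v$ itself to be deleted, again costing one unit of the node budget. Since the total budget is $b$, the second group has at most $b$ vertices, for a total of $rt + b$. Finally, if the reduced instance exceeds the relevant vertex bound we may safely report a trivial no-instance, so in every case the reduced instance is a kernel of the stated size.

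I expect the main obstacle to be the backward direction of safeness, where one must verify that re-inserting an easy node respects the specific budget mechanism of each variant (bounded per-node overlap for \LOECC{}, a shared global pool for \GOECC{}, and node deletions for \RECC{}) without introducing new mistakes or overspending. The budget-charging argument producing the additive $+b$ term for \GOECC{}/\RECC{} is the other delicate point, since it relies on restricting attention to vertices left untouched by $X$ so that their full color conflict survives and must be paid for out of the global budget.
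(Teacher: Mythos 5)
Your proof is correct and takes essentially the same approach as the paper: remove all easy nodes, then bound the surviving vertices of a yes-instance by charging each remaining conflict either to one of the at most $rt$ vertices touched by deleted hyperedges or, for \GOECC{}/\RECC{}, to one of the at most $b$ units of global budget. The paper's proof is terser---it omits the explicit safeness argument for the reduction rule and the observation that one pass suffices, both of which you supply---but the kernelization idea and the counting are identical.
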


\begin{proof}
    First, remove all easy nodes.
    Since deleting a hyperedge of size $r$ resolves conflicts in at most $r$ nodes, deleting $t$ edges resolves conflicts in at most $rt$ nodes.
    In \GOECC{} and \RECC{}, conflicts in an additional $b$ nodes can be resolved by using the global budget to assign an additional color or delete a node.
    Since we must resolve at least one conflict in every node after removing easy nodes, yes-instances have a bounded size.
\end{proof}

From here, multiple brute force algorithms yield FPT results of varying quality depending on the number of colors, the maximum hyperedge size $r$, and the value of $t+b$.

\section{Experiments}
\label{sec:experiments}

\begin{table*}[t]
    \centering
    \caption{Summary statistics of datasets - number of nodes $|V|$; number of (hyper)edges
    $|E|$; number of edge colors $k$; maximum and mean hyperedge size $r$
    and $\mu(|e|)$; and maximum and mean chromatic degree $\Delta^{\chi}$
    and $\mu(d^{\chi})$. Also, categorical edge clustering performance for
    the algorithms \LLPfull{} (\LLP{}), \LGfull{} (\LG{}), \GLPfull{} (\GLP{}),
    \GGfull{} (\GG{}), \RLPfull{} (\RLP{}), and \RGfull{} (\RG{}). \LLP{} was run
    with local budgets $b \in \{1, 2, 3, 4, 5, 8, 16, 32\}$. \GLP{} was run
    with budgets $b$ such that
    $b/|V| \in \{0, 0.5, 1, 1.5, 2, 2.5, 3, 3.5, 4\}$. 
    \RLP{} was run 
    with budgets $b$ such that $b/|V| \in 
    \{0, .01, .05, .1, .15, .2, .25\}$. \looseness=-1 Performance is listed in terms of the approximation
    guarantee given by the LP lower bound (lower is better) and each listed
    value is the maximum (worst) for the given algorithm across all tested values of $b$, rounded up to three
    decimal places.
    }
    \begin{tabular}{l l l l l l l l l l l l l l l l l}
        \hline
        \multicolumn{8}{c}{} & \multicolumn{6}{c}{Maximum $\alpha$} & \multicolumn{3}{c}{Maximum $\beta$}\\
        \cmidrule(lr){9-14} \cmidrule(lr){15-17}
        \emph{Dataset} & $|V|$ & $|E|$ & $k$ & $r$ & $\mu (|e|)$ & $\Delta^{\chi}$ & $\mu(d^{\chi})$ & \LLP{} & \LG{} & \GLP{} & \GG{} & \RLP{} & \RG{} & \LLP{} & \GLP{} & \RLP{} \\
        \hline
        \BRAIN{}          &638    &21180  &2 &2 &2 &2 &1.92 &1 &1.013 &1 &1.016 &1.007 &$2.785^{\ast}$ &1 &1 &2\\
        \MAG10{}         &80198 &51889 &10 &25 &3 &9 &1.26 &1.002 &1.2 &1 &1.2 &1.098 &$\infty^{\ast}$ &1 &1 &1.804\\
        \COOKING{}        &6714 &39774 &20 &65 &12 &20 &4.35 &1 &1.858 &1.010 &2.867 &1.149 &1.471 &1 &1 &1.678\\
        \DAWN{}           &2109 &87104 &10 &22 &4 &10 &3.72 &1.001 &1.614 &1.004 &2.427 &1.135 &1.670 &1 &1 &1.477\\
        \WALMART{}        &88837 &65898 &44 &25 &5 &40 &2.65 &1.002 &1.432 &1.001 &9.103 &1.242 &4.334 &1 &1 &1.491\\
        \TRIVAGO{}        &207974 &247362 &55 &85 &2 &32 & 1.55 &1.005 &1.347 &1.002 &9.113 &1.048 &$\infty^{\ast}$ &1.5 &1 &1.581\\
        \hline
    \end{tabular}
    \label{tab:summary}
    \parbox[t]{\textwidth}{$^{\ast}$ \footnotesize \textit{Note that observed $\alpha$ values can exceed $r$ because the $LP$ provides a
    \emph{lower bound} on the optimal number of edge mistakes. In the \MAG10{} and \TRIVAGO{} cases, the LP found fractional solutions
    with objective score 0. \RLP{} was able to create a rounded solution with no edge mistakes by exceeding the node deletion
    budget $b$, but \RG{} was not.}}
\end{table*}

We evaluate the approximation algorithms developed in Section \ref{sec:approximations} on 
a corpus of six real-world datasets. These have previously served as benchmarks
for \ECCfull{} algorithms \cite{amburg2020clustering, veldt2022optimal}, and capture many motivating settings for overlapping categorical clustering (e.g., co-authorship datasets and food ingredient datasets). Though we do not have optimal cluster assignments,
we are able to upper bound the approximation ratios via comparison to the optimal LP
objective scores.
Because
our algorithms are the first of their kind, there are no competitors against which to compare.
The exception is for choices of $b$ which recover non-overlapping \ECC{}, in which cases we observe consistency with
prior benchmarking results~\cite{amburg2020clustering,veldt2022optimal}.
Otherwise, these
experiments demonstrate that our approximation algorithms
can produce optimal or near-optimal solutions to their (NP-hard) objectives, oftentimes with
amazingly low runtimes.
We then discuss several notable experimental outcomes, building intuition for
the ways in which (hyper)graph structure affects \ECCfull{} algorithm performance. Finally, we
compare our two models for \OECCfull{}. Our code and datasets are available at
\href{https://github.com/TheoryInPractice/overlapping-ecc}{https://github.com/TheoryInPractice/overlapping-ecc}.

\textbf{Datasets.} Our datasets are summarized in Table \ref{tab:summary}.
\BRAIN{} \cite{crossley2013cognitive} is a graph in which nodes represent brain
regions and edges represent relationships revealed by MRI scans. There are two edge
colors -- one for pairs of regions with high fMRI correlation and another for
pairs with similar activation patterns. The \textsf{Drug Abuse Warning Network} (\DAWN{})
\cite{dawn2011} models drugs with nodes and patients with hyperedges indicating the
combination of drugs taken prior to an emergency room visit. Edges are colored by the 
patient's emergency room results (e.g., ``sent home'' or ``surgery''). The \MAG10{} hypergraph is constructed from
the \textsf{Microsoft Academic Graph} \cite{sinha2015overview}, with nodes representing authors, hyperedges
capturing co-authorship groups, and edge colors representing computer science venues
to which the groups have submitted papers. When the same group of co-authors has submitted
to multiple venues, the most common color is chosen, and ties are discarded. The
\COOKING{} dataset \cite{kaggle2015cooking} represents ingredients with nodes and recipes with
hyperedges. Edge colors are used to signify cuisine type. In the
\WALMART{} dataset, nodes represent individual products, hyperedges
capture groups of products that have been purchased together in a single shopping
trip, and edge colors are categorical ``trip type'' labels assigned by
Walmart \cite{kaggle2015walmart}. Finally, in the \TRIVAGO{} dataset~\cite{trivagodata}
nodes are
vacation rental properties on the titular website, and edges represent groups of
properties viewed by a single user during a single browsing session. Edge colors track
the country from which the browsing session took place. Thus, the goal in the context of
\ECC{} is to use the resulting hypergraph to cluster properties according to the countries
from which they are likely to attract renters.

\begin{figure}%
    \begin{subfigure}[t]{.49\linewidth}
        \includegraphics[width=\linewidth]{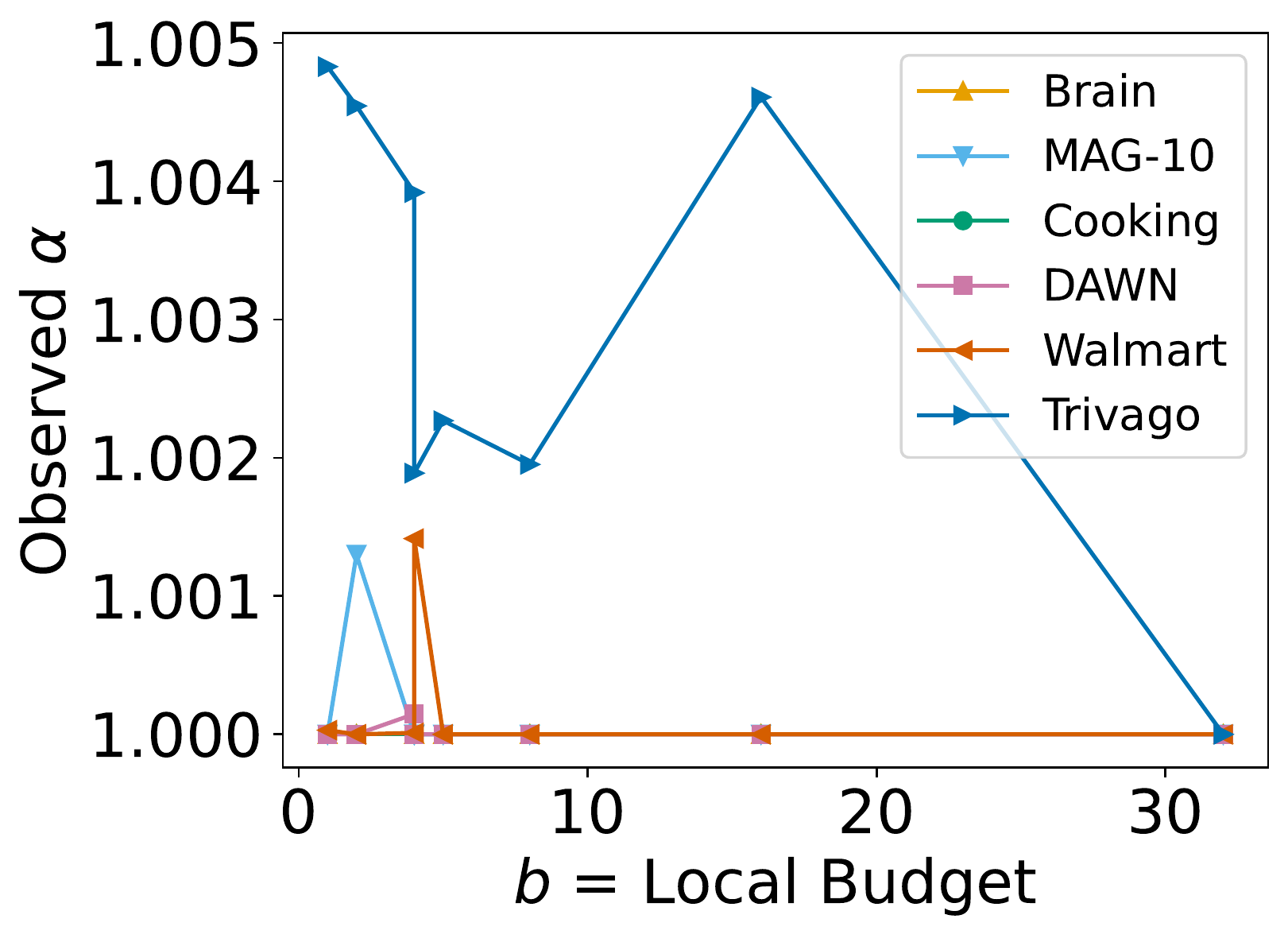}
        \caption{\LLP{} Observed $\alpha$}
        \label{fig:lo-alphas}
    \end{subfigure}\hspace{\fill} 
    \begin{subfigure}[t]{.49\linewidth}
        \includegraphics[width=\linewidth]{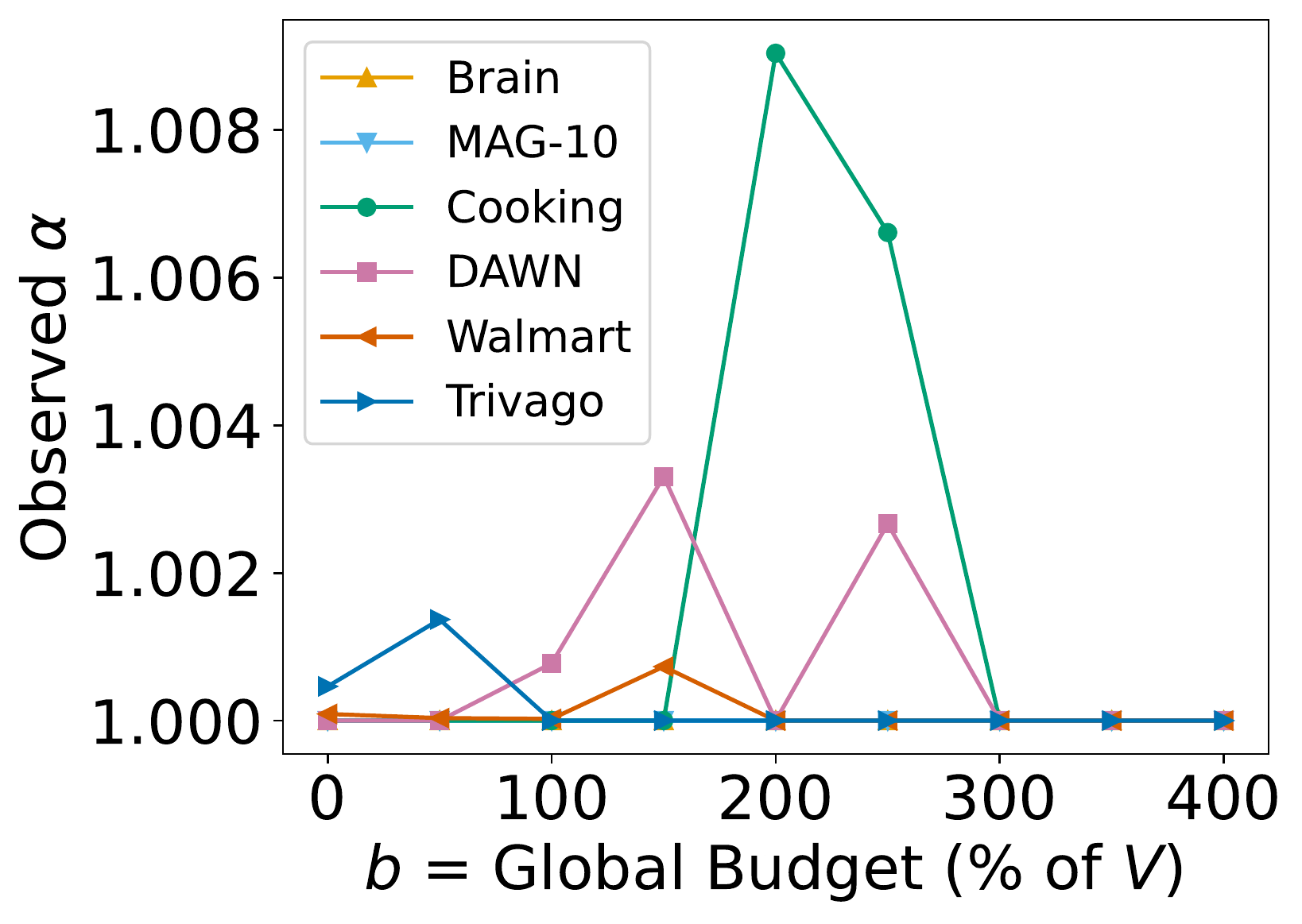}
        \caption{\GLP{} Observed $\alpha$}
        \label{fig:go-alphas}
    \end{subfigure}
    \bigskip
    \begin{subfigure}[t]{.49\linewidth}
        \includegraphics[width=\linewidth]{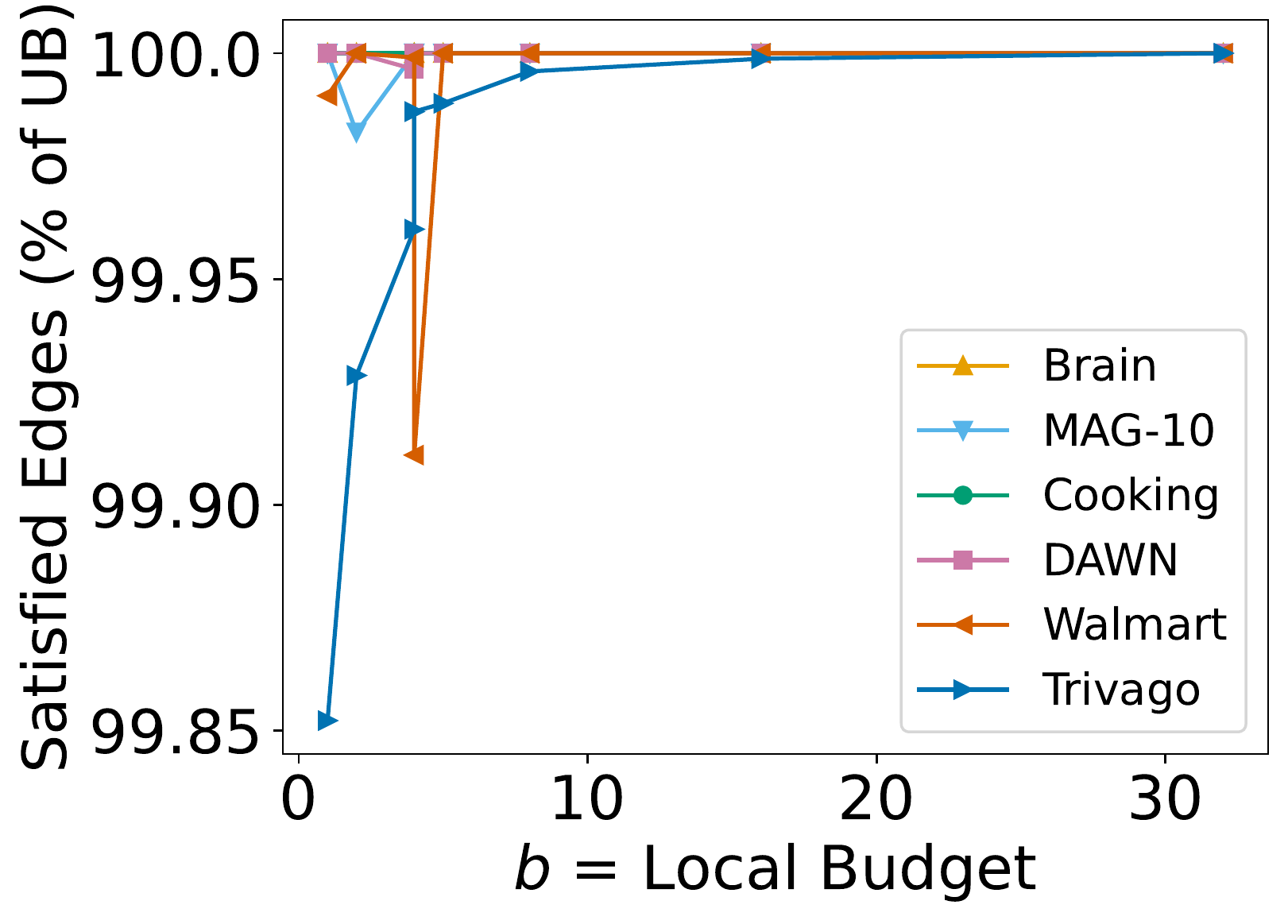}
        \caption{\LLP{} Satisfied Edges}
        \label{fig:lo-satisfied-percent}
    \end{subfigure}\hspace{\fill} 
    \begin{subfigure}[t]{.49\linewidth}
        \includegraphics[width=\linewidth]{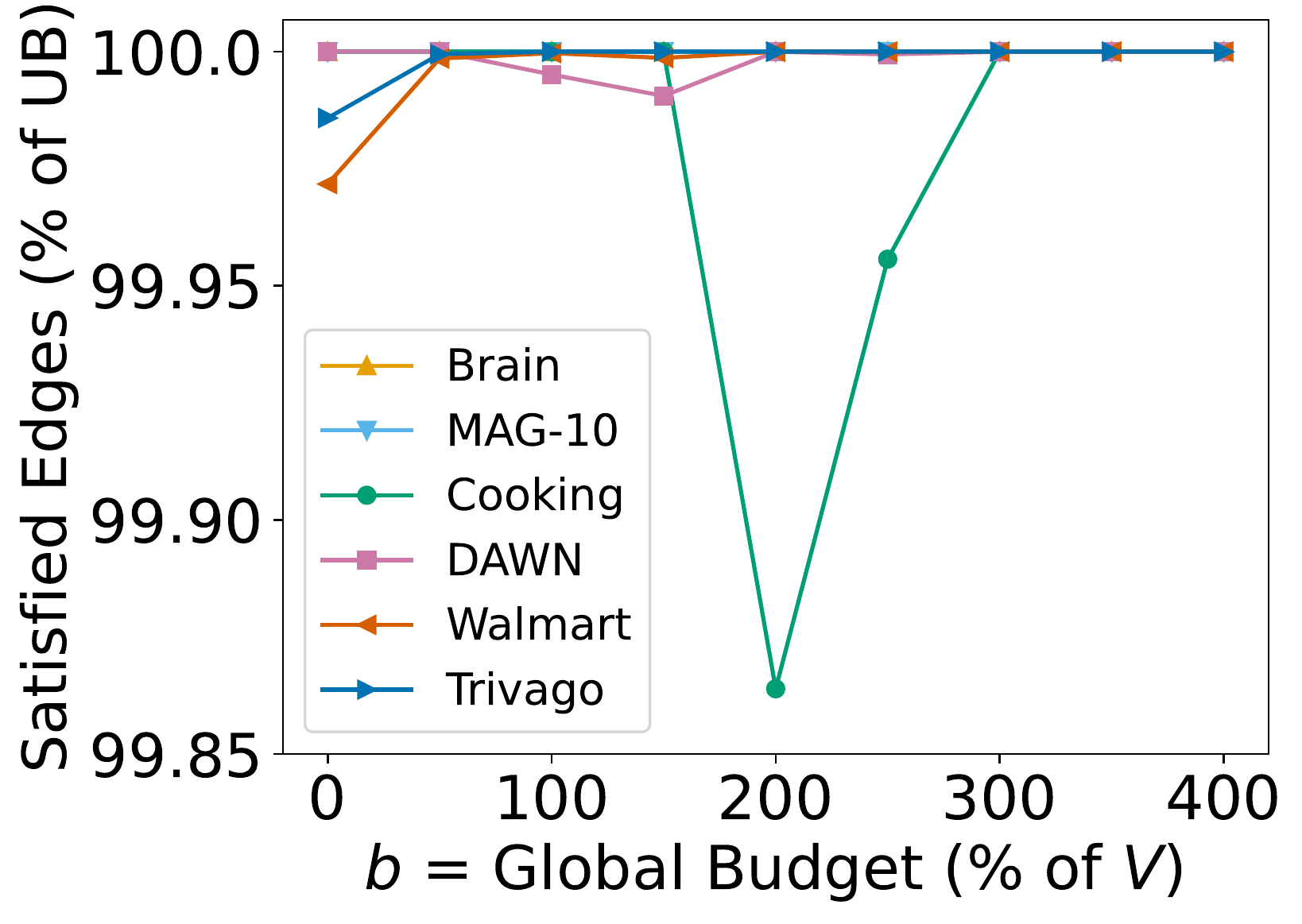}
        \caption{\GLP{} Satisfied Edges}
        \label{fig:go-satisfied-percent}
    \end{subfigure}
    \caption{(a)-(b): Observed $\alpha$ values for \LLP{} and \GLP{}, where the performance is evaluated
    against the LP lower bound. (c)-(d): Satisfied edge set sizes for
    \LLP{} and \GLP{}, presented as percentages of the upper bound derived from the LP.}%
    \label{fig:logo-alphas}%
\end{figure}

\vspace{-.2cm}
\subsection{Algorithm Performance}

We evaluate each of the six algorithms developed in Section \ref{sec:approximations},
with particular emphasis on the bicriteria LP-rounding algorithms described by
Theorems
\ref{thm:LLP}, \ref{thm:GLP}, and \ref{thm:RLP}. 
We refer to these as \LLPfull{} (\LLP{}), \GLPfull{} (\GLP{}), and
\RLPfull{} (\RLP{}), respectively. We select parameters to produce small approximation guarantees on
both the edge mistake and budget objectives (constant-constant where possible). Specifically, 
we test \LLP{} with $\rho = 1/2$, \GLP{} with
$\delta = 1/2$, and \RLP{} with $\varepsilon = 1/3$, guaranteeing approximation factors
$(\alpha = 2, \beta = 2 - 1/b)$, $(2b + 5, 2)$, and $(6, 3)$. We test with a variety
of values for the budget $b$, in each case selected to be representative of the practically
useful parameter space. 

Table \ref{tab:summary}
reports how well each algorithm approximates its objective. The maximum $\alpha$
is the ratio between the number of mistakes made by a given algorithm and the lower bound
provided by the LP relaxation, maximized across all tested budgets $b$. The maximum
$\beta$ is the approximation factor on the budget, once again
maximized across all tested values of $b$. Amazingly, we find that all three LP-rounding
algorithms drastically outperform their theoretical guarantees. In particular,
Figures \ref{fig:logo-alphas}\subref{fig:lo-alphas} and \ref{fig:logo-alphas}\subref{fig:go-alphas} show that both
\OECCfull{} algorithms provide nearly perfect performance on the edge mistake objective.
Indeed, both LPs frequently find optimal (integral) solutions, leaving no rounding to do. 
Though we generally understand our objective
as minimizing edge mistakes, we can also consider maximizing the number of edge satisfactions. These objectives
are equivalent (though they differ in terms of approximations), and just as the LP objective scores
provide a lower bound on edge mistakes, we can derive an upper bound on edge satisfactions
by subtracting the LP objective scores from $|E|$. \looseness=-1 
Figures \ref{fig:logo-alphas}\subref{fig:lo-satisfied-percent} and \ref{fig:logo-alphas}\subref{fig:go-satisfied-percent}
show that both \LLP{} and \GLP{} produce satisfied
edge sets with size at least 99.8\% of optimal. 

We note that these observed approximation factors are especially impressive in the \GLP{} context, where the
theoretical guarantee is not constant but rather a linear function in $b$. Figure \ref{fig:logo-alphas}\subref{fig:go-alphas}
shows that we do not observe
this linear relationship in practice, suggesting that our aforementioned interest in further
study of rounding techniques for the \GOECC{} LP relaxation may bear fruit.

Even more remarkable is the performance of these algorithms on $\beta$. Our empirical
results show that, in nearly every case,
our bicriteria \OECCfull{} algorithms functioned as single-criteria approximations. Indeed, we were unable to
produce a single case in which \GLP{} exceeded its color assignment budget on real-world data. 
\LLP{} performed nearly as
well, producing within-budget approximations on all values of $b$ 
for 5 out of 6 datasets. The \TRIVAGO{} dataset
is the lone exception, though Figure \ref{fig:r-plots}\subref{fig:lo-trivago-betas}
shows that even in this case the algorithm 
outperformed its guarantee for most budgets.

\RLP{} doesn't achieve the near-exactness of the overlapping variants, but it still
significantly outperforms its guarantees. 
Figures \ref{fig:r-plots}\subref{fig:r-alphas} and \ref{fig:r-plots}\subref{fig:r-betas} show that across all datasets and budgets, we never observe
$\alpha > 1.25$ or $\beta > 2$. In fact, in the \COOKING{} and \DAWN{} datasets we see that exceeding the
node deletion budget tends to allow for clusterings with fewer edge mistakes than would otherwise be possible.
We discuss this in more detail in Section \ref{subsec:discussion}.

\begin{table}[b]
    \centering
    \caption{Algorithm runtimes - 
    Reported figures are the maximum across all tested budgets.
    Algorithms and budget values are as described in Table \ref{tab:summary}.
    Experiments
    were run on a machine with an Intel(R) Core(TM) i7-10700 CPU @ 2.90GHz (8 cores)
    and
    64 GB of RAM.}
    \noindent\begin{tabular}{@{}l l l l l l l l}
        \multicolumn{1}{c}{} & \multicolumn{6}{c}{Max Runtime (seconds)}\\
        \cline{2-7}
        \emph{Dataset} & \LLP{} & \LG{} & \GLP{} & \GG{} & \RLP{} & \RG{}\\
        \hline
        \BRAIN{}          &5.68 &0.03 &5.53 &0.03 &5.43 &0.0 \\ 
        \MAG10{}          &9.06 &0.76 &10.39 &0.92 &32.96 &0.81 \\
        \COOKING{}         &60.38 &0.33 &90.09 &0.35 &276.0 &0.08 \\
        \DAWN{}            &5.09 &0.03 &7.91 &0.16 &19.67 &0.04 \\
        \WALMART{}         &307.23 &3.23 &643.39 &3.78 &6968.25 &0.97 \\
        \TRIVAGO{}         &120.95 &12.14 &119.09 &15.35 &206.33 &4.26 \\
        \hline
    \end{tabular}
    \label{tab:runtimes}
    
\end{table}

We conclude this section by highlighting the runtime performance of our algorithms.
Table \ref{tab:runtimes} gives runtimes for each of our greedy and LP-rounding algorithms.
Given their simplicity, it is not surprising that each of
our greedy algorithms is extremely fast. The LP-rounding algorithms have more surprising results.
Amazingly, both \LLP{} and \GLP{}
produce optimal or very nearly optimal solutions very quickly on
a standard workstation with 64 GB of RAM - the sort of machine widely available at nearly
every university or private company. Indeed, even with hundreds of thousands of nodes
and hyperedges, \LLP{} never requires more than  ${\sim}6$ minutes, and \GLP{} never requires more than
${\sim}11$. On the same machine, \RLP{} generally has similar runtime performance. \WALMART{} is
the exception, but even in this case \RLP{} requires only ${\sim}2$ hours. 

\vspace{-.1cm}
\subsection{Discussion}
\label{subsec:discussion}

In this section we present several case studies as a means to answering the following
question: ``what questions should we ask about the structure of our data to understand
how \ECC{} algorithms will perform?'' We analyze performance in terms of both the
problem objectives and a related notion of cluster quality. Lastly, we compare the clusters
generated by our two \OECCfull{} models.

\begin{figure}%
    \begin{subfigure}[t]{.49\linewidth}
        \includegraphics[width=\linewidth]{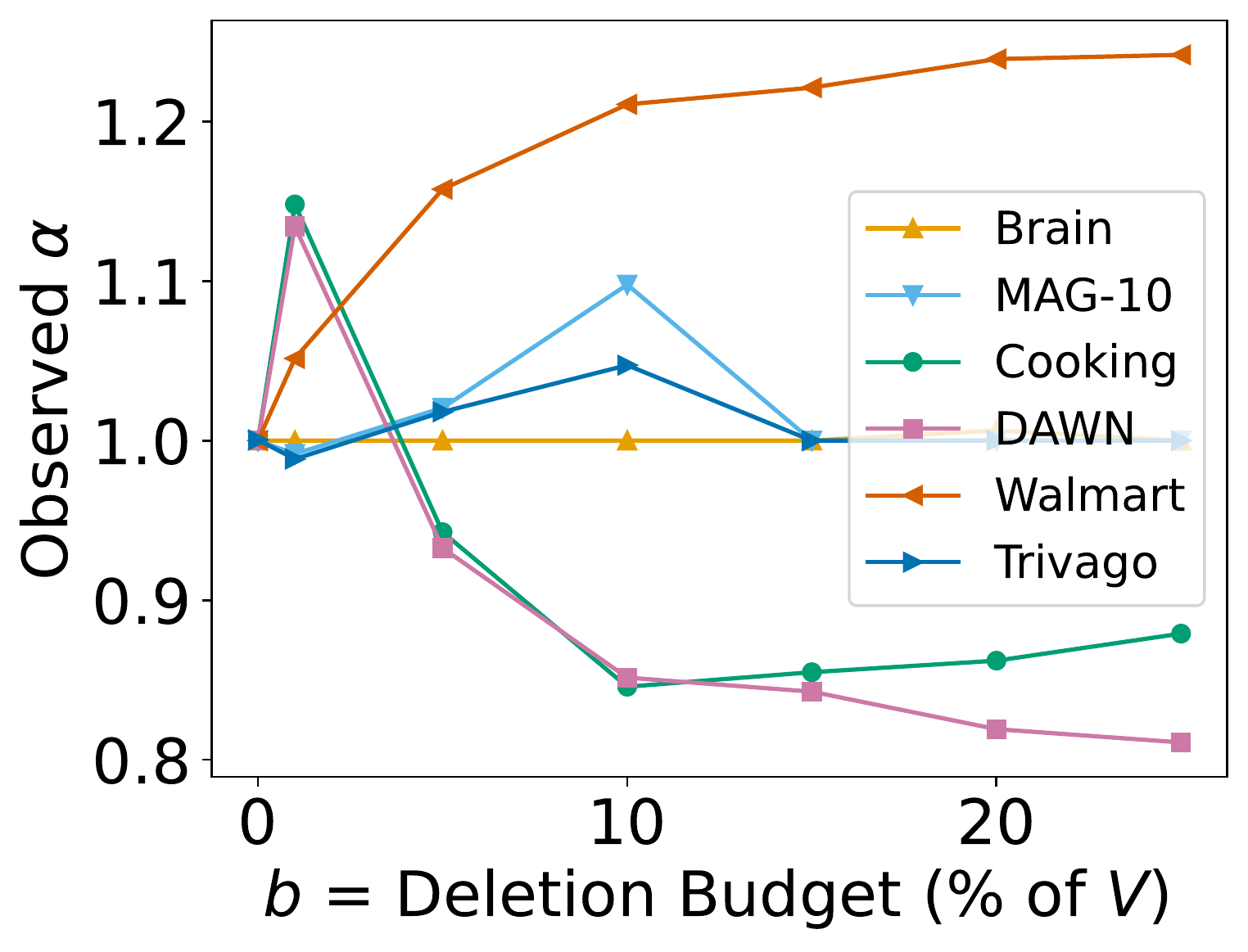}
        \caption{\RLP{} Observed $\alpha$}
        \label{fig:r-alphas}
    \end{subfigure}
    \begin{subfigure}[t]{.49\linewidth}
        \includegraphics[width=\linewidth]{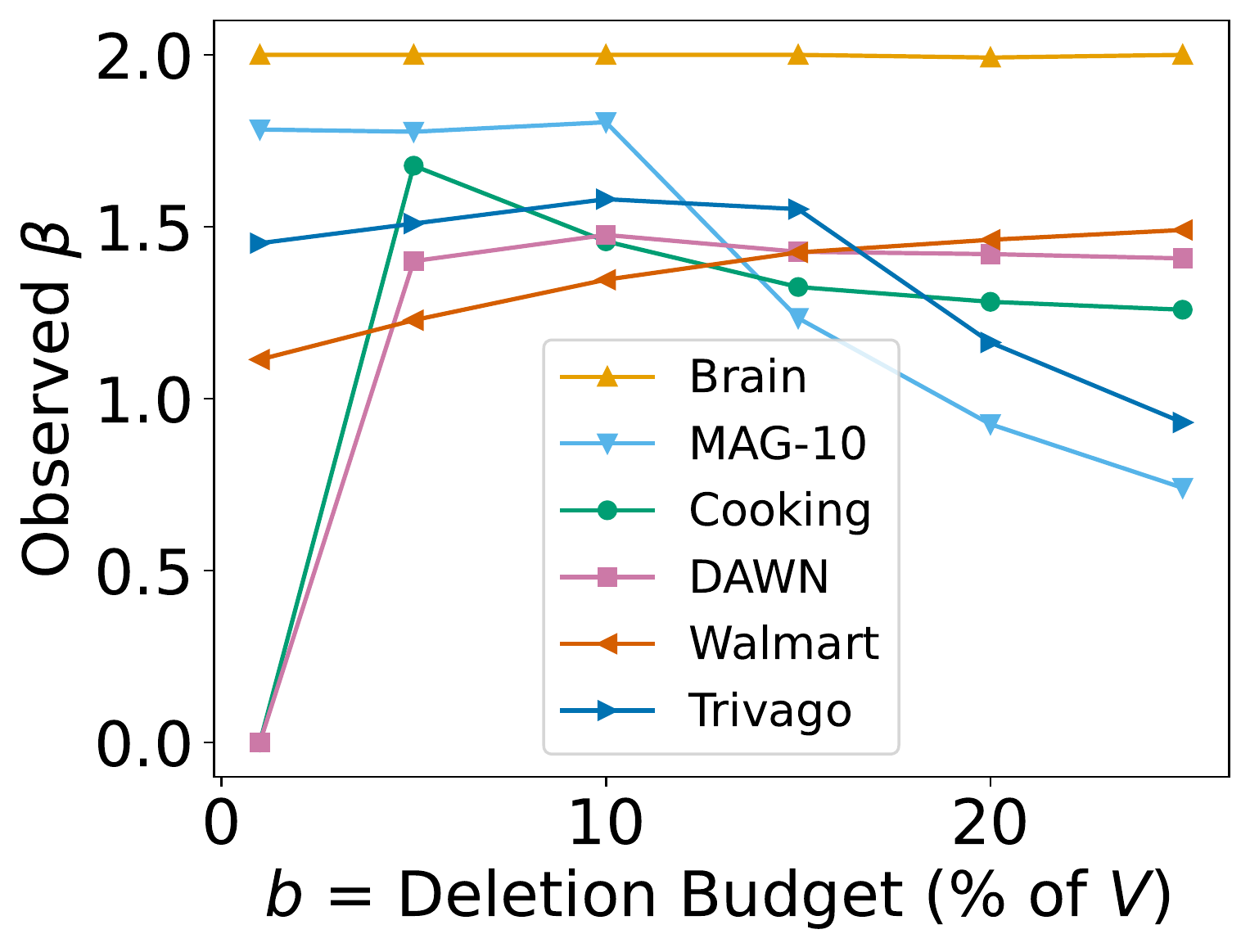}
        \caption{\RLP{} Observed $\beta$}
        \label{fig:r-betas}
    \end{subfigure}\hspace{\fill} 
    \bigskip
    \begin{subfigure}[t]{.49\linewidth}
        \includegraphics[width=\linewidth]{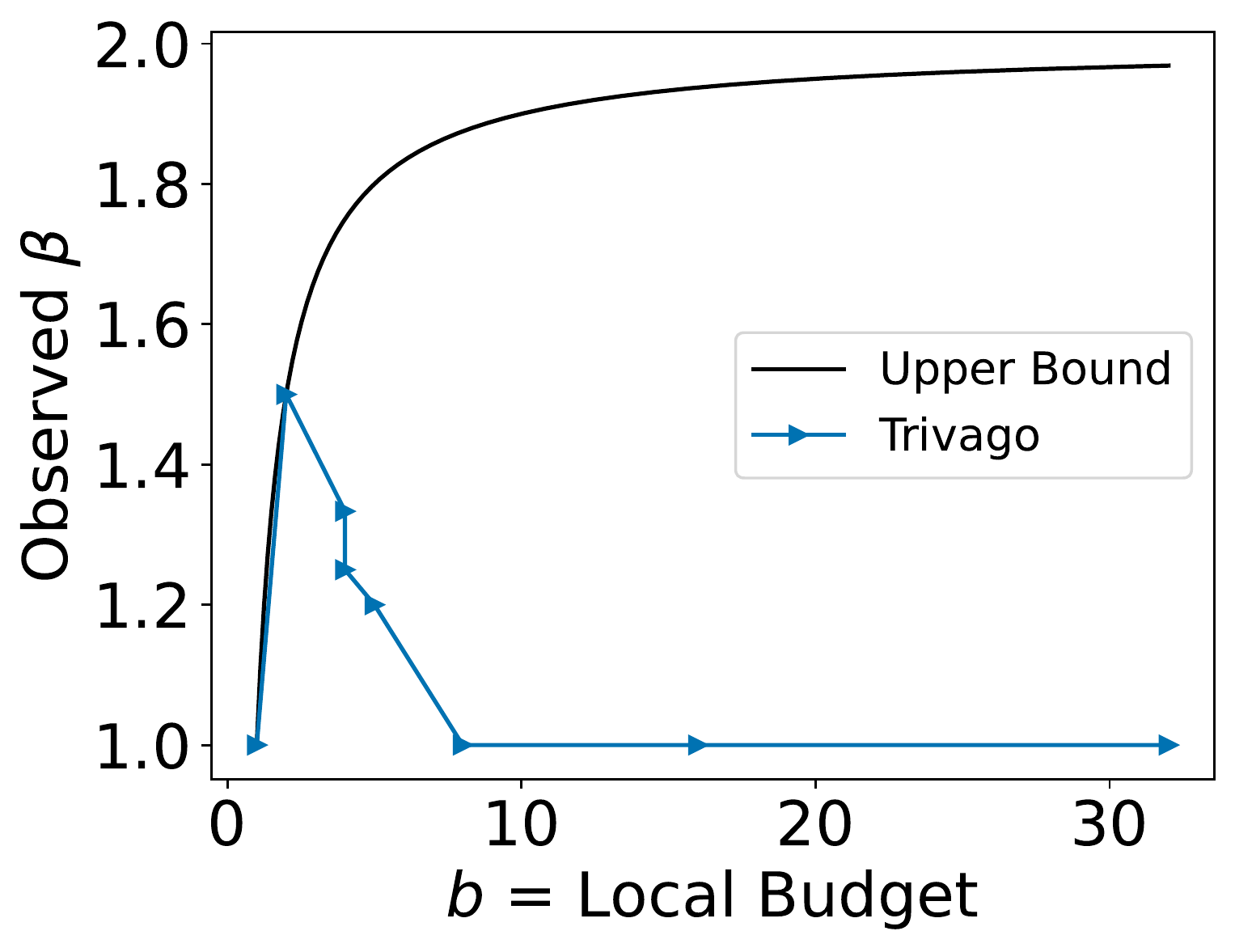}
        \caption{\LLP{} Observed $\beta$}
        \label{fig:lo-trivago-betas}
    \end{subfigure}\hspace{\fill} 
    \begin{subfigure}[t]{.49\linewidth}
        \includegraphics[width=\linewidth]{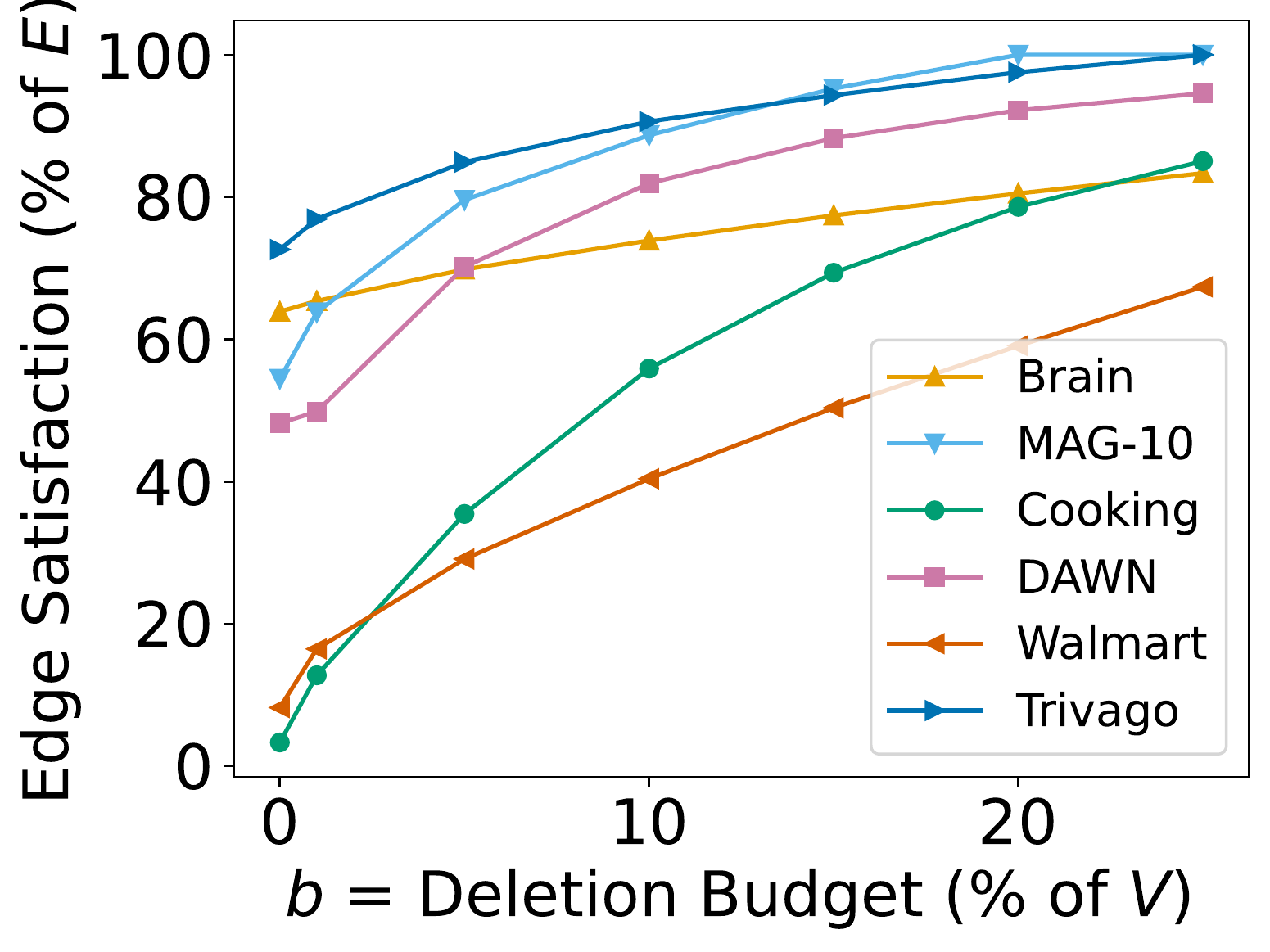}
        \caption{\RG{} Edge Satisfaction}
        \label{fig:r-greedy-satisfaction}
    \end{subfigure}
    \caption{(a)-(b): Observed \RLP{} $\alpha$ and $\beta$ values. $\alpha$ values less than
    1 indicate that the rounded clustering has fewer edge mistakes than is possible
    without violating the node deletion budget, while $\beta$ values less than 1 indicate
    that less than the full budgeted allotment of nodes were deleted. 
    (c): Observed \LLP{} $\beta$ values for the \TRIVAGO{} dataset. The black line is the
    upper \looseness=-1 bound, $2 - 1/b$. (d): Edge satisfaction percentages
    for \RG{}.}%
    \label{fig:r-plots}%
\end{figure}

\textbf{Understanding \RLP{} $\alpha$ Values}.
Figure \ref{fig:r-plots}\subref{fig:r-alphas} raises the following question: why is RLP able to outperform the LP lower bound
for \COOKING{} and \DAWN{}, but not for other datasets? The answer lies in the
structure of the data. A first hint comes from Table \ref{tab:summary}: the mean chromatic degrees in these two
datasets are notably higher than in all others. A more detailed explanation requires new
machinery. We define the \emph{non-dominant degree} of vertex $v$, denoted $d_v^{nd}$, as the number
of (hyper)edges containing $v$ which are \emph{not} $v$'s most frequent edge color. Formally, we
recall the notation of Section \ref{sec:greedy-approximations} and define $d_v^{nd} = d_v - \numvc{v}{\permvi{v}{1}}$.
The \emph{non-dominant degree percentage} of node $v$ is the quotient
$d_v^{nd\%} = d_v^{nd}/d_v$.
\RECC{} can only assign one color to each node, so we should expect that the set of deleted nodes
will tend to have high non-dominant degree and non-dominant degree percentage.
In fact, the
former is precisely the greedy criterion used by \RG{},
and Figure \ref{fig:r-plots}\subref{fig:r-greedy-satisfaction} 
shows that increasing the deletion budget is especially helpful for \RG{} on \COOKING{}.
Table \ref{tabel:degree-stats} gives more detail:
\COOKING{} and \DAWN{} are outliers in that relatively many of their nodes have degree
well-spread across multiple colors. 

Though we have presented this information formally, an intuitive
understanding of the data should be sufficient to predict this structural difference. It makes sense that
many food ingredients are used widely across multiple cuisines, and that many drugs lead commonly to
multiple emergency-room outcomes, based on their common combinations. Conversely, we should expect that
vacation rentals, while perhaps receiving interest from multiple countries, attract the vast
majority of their interest from their own country. We note that \BRAIN{} is an interesting case as well. 
Table \ref{tabel:degree-stats} shows that it has the least heavy-tailed distribution of non-dominant degree. Additionally, it is an
outlier among our datasets in that it is our only (non-hyper) graph, it has only two colors, and it has by far the highest
edge density. Further study is needed to understand how the structure of \BRAIN{} impacts algorithm
performance.

\begin{table}[]
    \centering
    \caption{Network Structure - For each dataset we present maximum, mean, and median
    non-dominant degree, as well as the
    fraction of nodes with chromatic degree $> 1$, and with non-dominant degree percentage $d^{nd\%}$
    at least $5\%$ and $10\%$.}
    \noindent\begin{tabular}{@{}l l l l l l l l}
        \emph{Dataset} & max & mean & med. & $d^{\chi} > 1$ & $d^{nd\%} \geq 5\%$ & $\geq 10\%$\\
        \hline
        \BRAIN{}          &67 &13.74 &12 &91.54 &80.56 &68.50 \\
        \MAG10{}          &96 &0.51 &0 &18.51 &18.50 &18.43 \\
        \COOKING{}         &14595 &38.56 &1 &61.32 &60.56 &58.92 \\
        \DAWN{}            &9944 &84.65 &3 &74.40 &74.02 &72.40 \\
        \WALMART{}         &4379 &2.90 &1 &62.63 &52.62 &52.61 \\
        \TRIVAGO{}         &222 &0.76 &0 &23.27 &22.99 &22.18 \\
        \hline
    \end{tabular}
    \label{tabel:degree-stats}
\end{table}

\textbf{Cluster Quality: \LLP{} vs \LG{}}.
Table \ref{tab:summary} tells us that, in terms of $\alpha$, the worst case performances of our LP-rounding
algorithms are always better than those of their greedy counterparts. In fact we can say
something stronger: across all of our datasets and
both \OECCfull{} models, the LP-rounding
algorithms \emph{always} minimized edge mistakes at least as well as their greedy counterparts,
oftentimes by wide margins.
Counting edge mistakes,
however, is not the only way to measure cluster quality. We now ask whether our LP-rounding
algorithms maintain their superior performance when evaluated by other metrics, using
\LLP{} and \LG{} as an example case. To this end, we call a node $v$ \emph{unused}
by a clustering if it participates in zero satisfied edges. A low number of unused
nodes indicates that a method has
produced dense clusters, i.e., clusters where members combine well. 
Figure \ref{fig:cluster-quality-plots}\subref{fig:llplg-unused} tells us that
in addition to the edge mistake objective, \LLP{} also outperforms LG with respect to cluster
density, as indicated by unused nodes. We ask, however, why this difference in performance is
variable across datasets. Once again, basic intuition about the datasets and algorithms in question
can help us predict the results. Recall that a hyperedge of color $c$ is unsatisfied if \emph{any} of its nodes
is not colored $c$. Consequently, a greedy assignment of color $c$ to node $v$ has more chances
to become irrelevant if the hyperedges of color $c$ that contain $v$ tend to be large. 
Figure \ref{fig:cluster-quality-plots}\subref{fig:llplg-unused}
supports this intuition: \LLP{} outperforms \LG{} with respect to unused nodes especially well
on the \COOKING{} and \WALMART{} datasets, which tend to have large hyperedges, while the performance
gap is lesser on datasets with small hyperedges. 

\textbf{Comparison of \GLP{} and \LLP{}}.
We conclude with a brief comparison of our two models for \OECCfull{}. If the budget $b$ is viewed as
an average per node, then we can compare these models directly, i.e., a local budget $b$
is compared to a global budget $(b - 1)\cdot|V|$. Viewed in this
way, \GOECC{} is a generalization of \LOECC{}, because for such budgets any solution to the
latter is also feasible for the former. Thus, whenever budgets are aligned
\GLP{} will satisfy a larger percentage of edges (Figure \ref{fig:cluster-quality-plots}\subref{fig:golo-satisfaction}).
We note that in real data the margins are quite large: oftentimes \GLP{} satisfies an additional
$10 - 40+$ percent of edges. Figure \ref{fig:cluster-quality-plots}\subref{fig:golo-mistakes} further emphasizes this
point from an alternate perspective: whenever the budget allows for overlap,
using \GLP{} instead of \LLP{} reduces the edge mistake
objective score by (oftentimes much) more than $40\%$.
 We might also ask whether the increased freedom granted
to \GOECC{} results in higher quality clusters by other measures. In particular, 
Figure \ref{fig:cluster-quality-plots}\subref{fig:golo-unused} shows that
\GLP{} generally outperforms \LLP{} with respect to unused nodes. It is encouraging that the increased
freedom granted to \GOECC{} results in higher quality clusters, even measured by a metric that we
do not directly optimize for.

\begin{figure}%
    \begin{subfigure}[t]{.49\linewidth}
        \includegraphics[width=\linewidth]{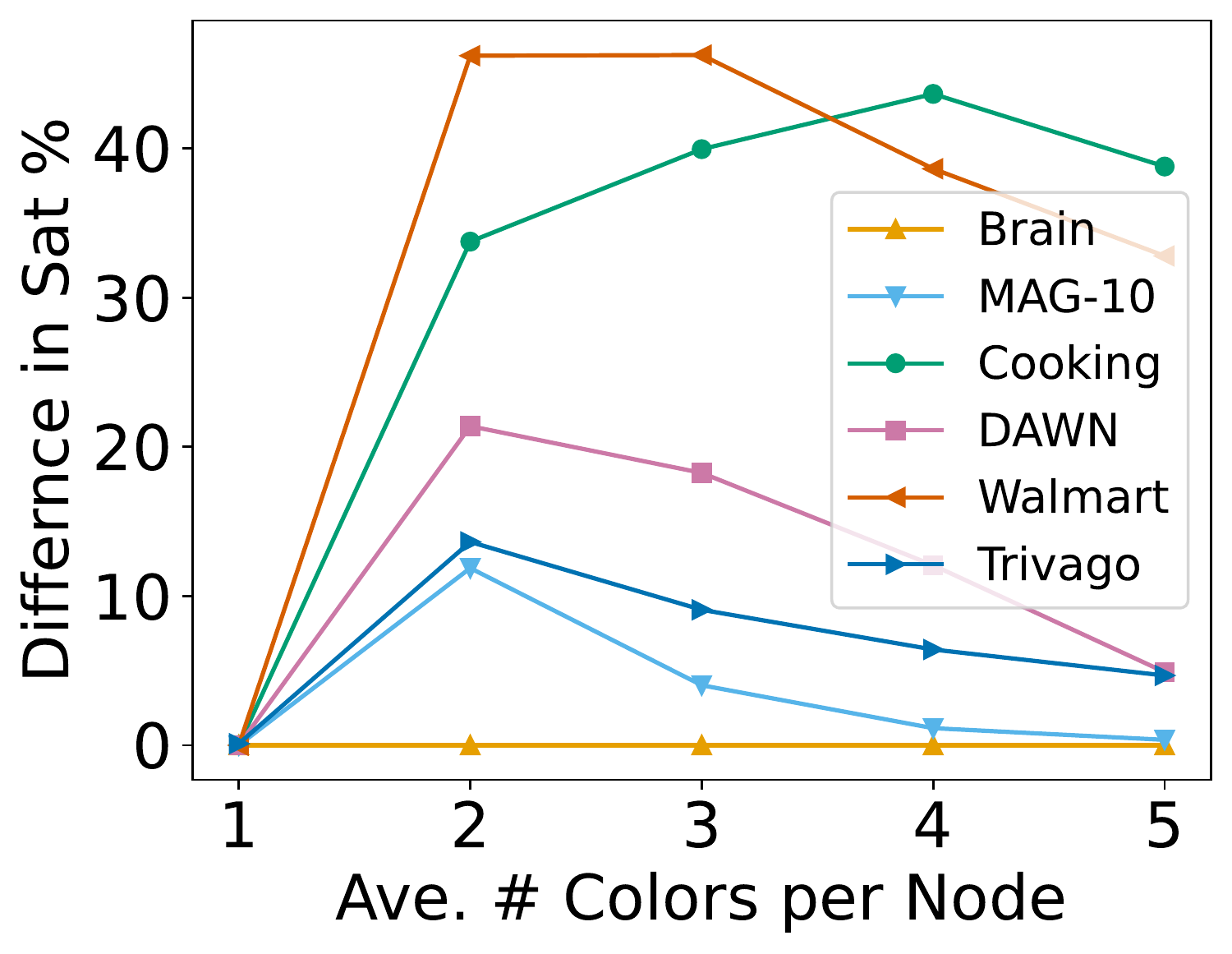}
        \caption{Satisfaction: \GLP{} vs \LLP{}}
        \label{fig:golo-satisfaction}
    \end{subfigure}\hspace{\fill} 
    \begin{subfigure}[t]{.49\linewidth}
        \includegraphics[width=\linewidth]{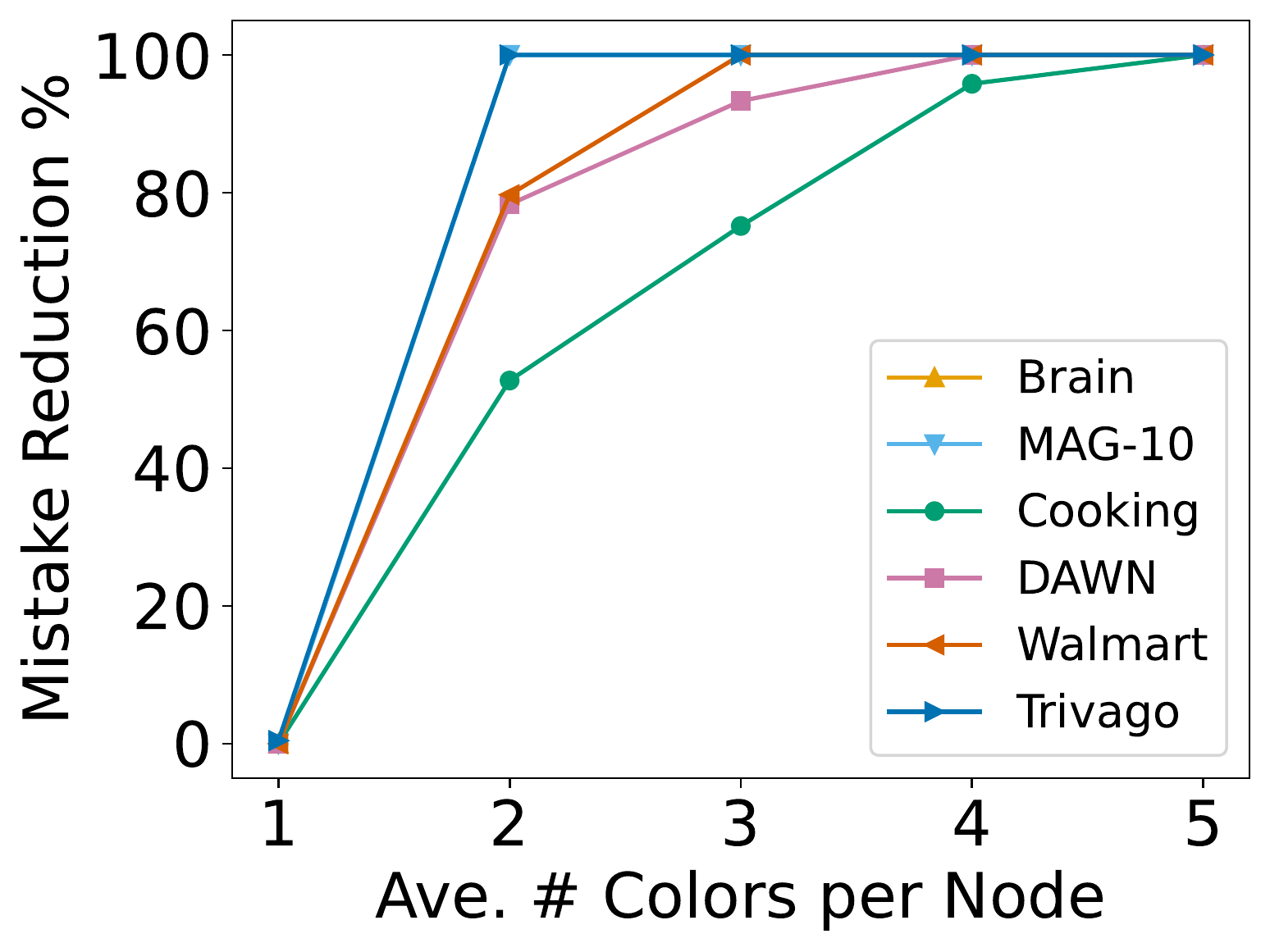}
        \caption{Mistakes: \GLP{} vs \LLP{}}
        \label{fig:golo-mistakes}
    \end{subfigure}
    \bigskip
    \begin{subfigure}[t]{.49\linewidth}
        \includegraphics[width=\linewidth]{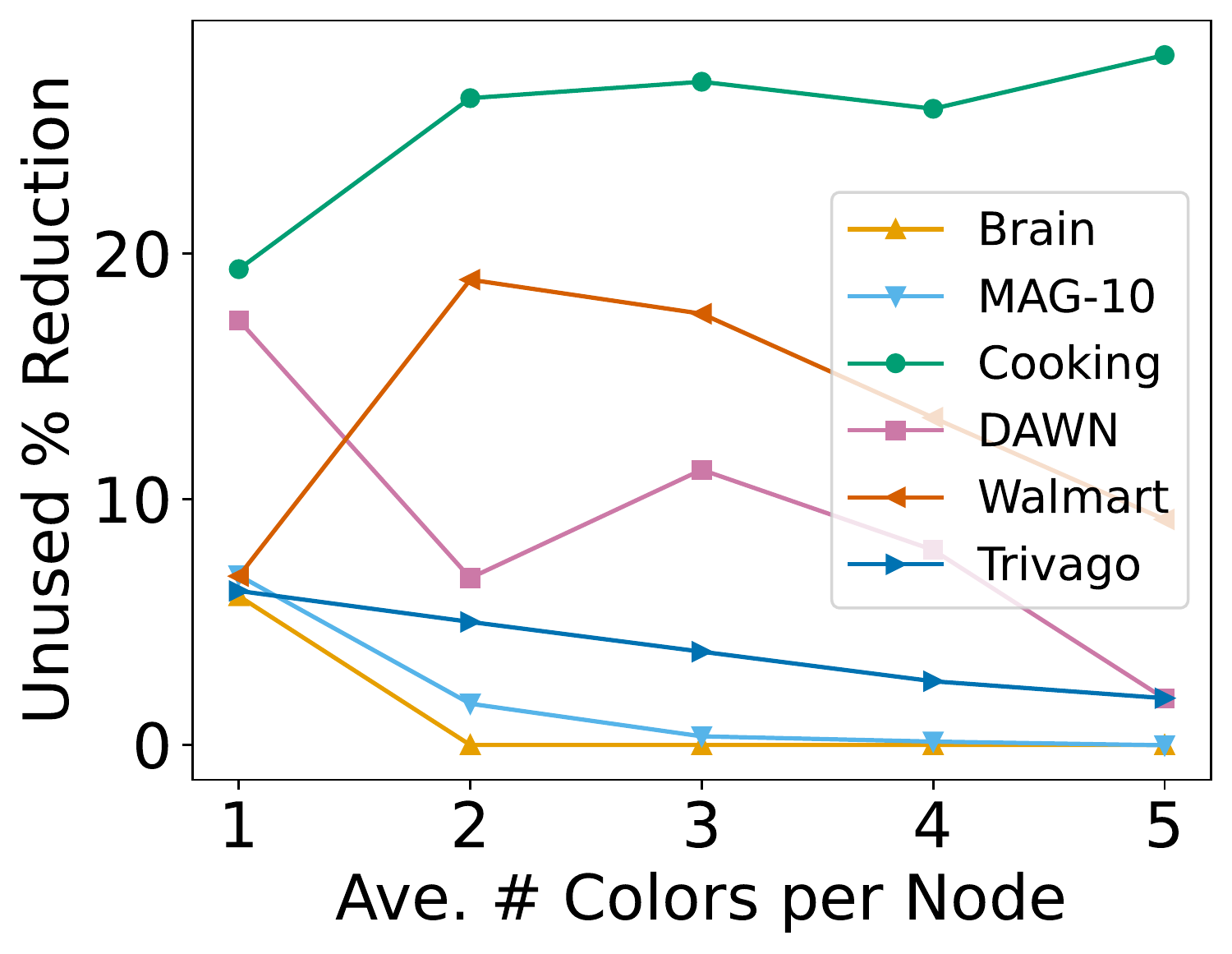}
        \caption{Unused: \LLP{} vs \LG{}}
        \label{fig:llplg-unused}
    \end{subfigure}\hspace{\fill} 
    \begin{subfigure}[t]{.49\linewidth}
        \includegraphics[width=\linewidth]{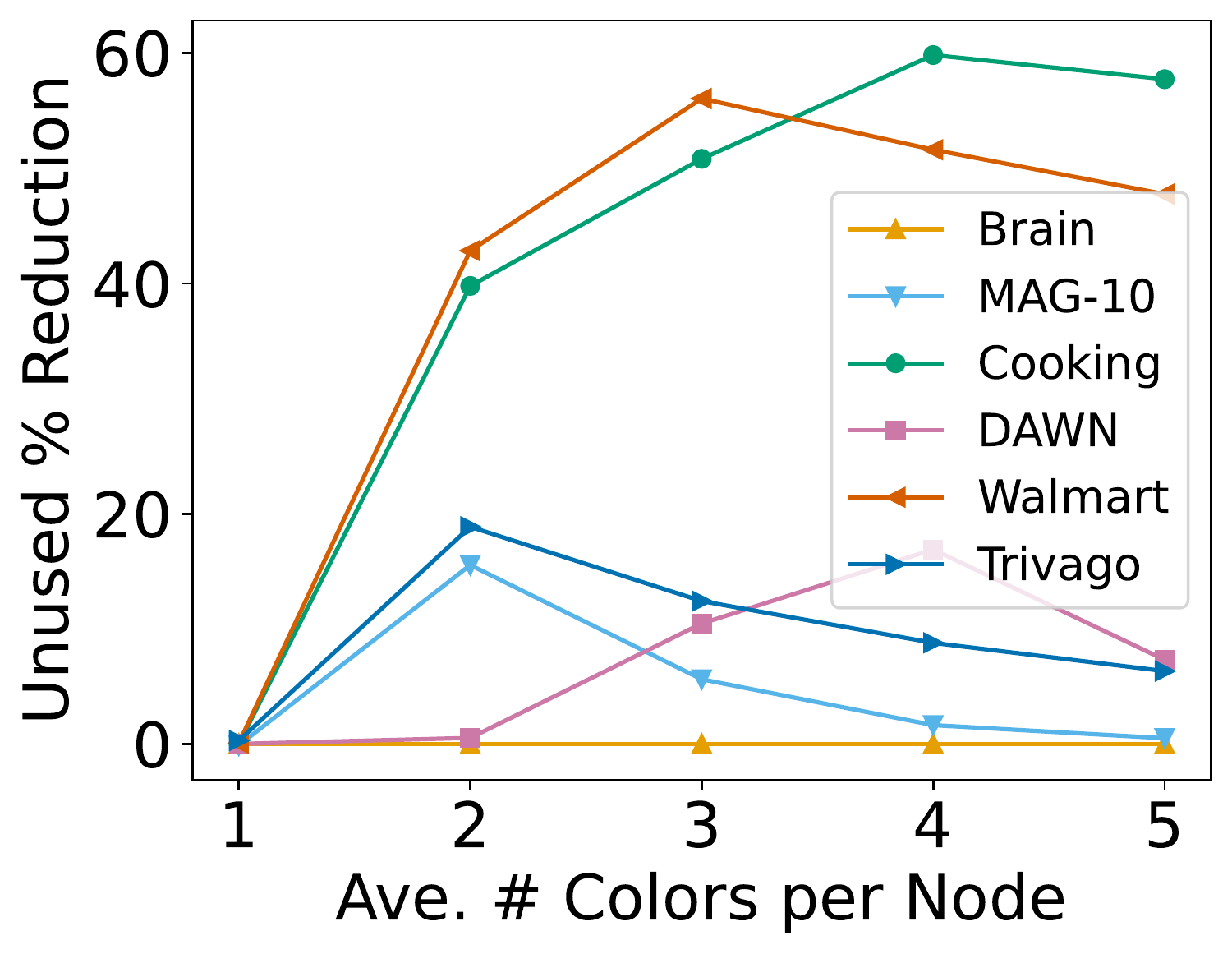}
        \caption{Unused: \GLP{} vs \LLP{}}
        \label{fig:golo-unused}
    \end{subfigure}
    \caption{(a): Absolute difference between percent \GLP{} and \LLP{} edge satisfaction percentages, 
    with same (average) budget per node. (b): Percent reduction (relative difference) in mistakes
    when using \GLP{} instead of \LLP{}. (c): Percent reduction in unused
    nodes when using \LLP{} instead of \LG{}. (d) Percent reduction in
    unused nodes when using \GLP{} instead of \LLP{}. }%
    \label{fig:cluster-quality-plots}%
\end{figure}

\section{Conclusion}

We have addressed two shortcomings of existing categorical (hyper)graph
clustering methods by presenting three generalizations of \ECCfull{}, allowing for either
overlapping clusters or robustness to noisy points. We have addressed the parameterized
complexity of each problem, as well as providing greedy single-criteria and LP-rounding-based
bicriteria approximation algorithms. Concrete questions remain. For example, are there
constant-factor single-criteria approximations for \LOECC{} and \RECC{}, and is there a constant-constant bicriteria
approximation for \GOECC{}? More generally, bicriteria inapproximability is a relatively unexplored topic, and
it would be interesting to develop the theory of hardness in this area.
From the perspective of parameterized algorithms, it is open to determine whether
polynomial kernels exist for our problems in the parameter $t+b$, as well as to identify measures of
edge-colored hypergraph structure which can provably explain the real-world performance of our approximation algorithms.
\begin{acks}
This work was supported in part by the Gordon \& Betty Moore Foundation under award GBMF4560 to Blair D. Sullivan.
\end{acks}


\bibliographystyle{ACM-Reference-Format}
\bibliography{refs}


\end{document}